\theoremstyle{plain}\newtheorem{theorem}{Theorem}[section]
\theoremstyle{plain}\newtheorem{lemma}[theorem]{Lemma}
\theoremstyle{plain}\newtheorem{corollary}[theorem]{Corollary}
\theoremstyle{plain}
\theoremstyle{plain}\newtheorem{proposition}[theorem]{Proposition}
\theoremstyle{definition}
\theoremstyle{remark}\newtheorem{remark}{Remark}
\theoremstyle{definition}\newtheorem{def:and:lemma}[theorem]{Definition and Lemma}
\theoremstyle{plain}
\numberwithin{equation}{section}
\newcounter{remarks}
\newcommand{\D}{\textnormal{d}}
\newcommand{\lsp}{\big \langle }
\newcommand{\rsp}{\big \rangle }
\newcommand{\re}{\operatorname{Re}}
\newcommand{\sno}{\vert\hspace{-1pt}\vert}
\newcommand{\eps}{\varepsilon}
\newcommand{\ud}{\mathrm{d}}
\newcommand{\ue}{\mathrm{e}}
\newcommand{\ui}{\mathrm{i}}
\newcommand{\R}{\mathbb{R}}
\newcommand{\Z}{\mathbb{Z}}
\newcommand{\N}{\mathbb{N}}
\newcommand{\cF}{\mathcal{F}}
\begin{document}

\bibliographystyle{alpha}

\title{\Huge On the global minimum of the energy- momentum relation for the polaron}

\author{Jonas Lampart\thanks{CNRS \& LICB (UMR 6303 CNRS \& UBFC), 9. Av. Alain Savary, 21078 Dijon, France.\\ Email: \texttt{lampart@math.cnrs.fr}},\phantom{i} David Mitrouskas\thanks{Institute of Science and Technology Austria (ISTA), Am Campus 1, 3400 Klosterneuburg, Austria.\\ Email: \texttt{david.mitrouskas@ist.ac.at}}\phantom{i} and Krzysztof My\' sliwy\thanks{Institute of Science and Technology Austria (ISTA), Am Campus 1, 3400 Klosterneuburg, Austria.\\ Email: \texttt{krzysztof.mysliwy@ist.ac.at}}
}

\date{July 11, 2022}
\maketitle

\frenchspacing

\begin{spacing}{1.15} 
 
\begin{abstract}
For the Fr\"ohlich model of the large polaron, we prove that the ground state energy as a function
of the total momentum has a unique global minimum at momentum zero. This implies the non-existence of a ground state of the Fr\"ohlich Hamiltonian and thus excludes the possibility of a localization transition at finite coupling.
\end{abstract}

\allowdisplaybreaks

\section{Introduction}

The polaron describes an electron interacting with the quantized optical modes of a polarizable lattice. It poses a classic problem in solid state physics that was initiated by Landau's one-page paper  \cite{Landau1933} about the possibility of self-trapping of an electron by way of deformation of the lattice. In fact, one of our results shall be that for the large polaron, where the lattice is described by a continuous non-relativistic quantum field, this type of self-localization is absent for any finite value of the coupling.

Following H. Fr\"ohlich \cite{Froehlich1933,Froehlich1954} the large polaron is defined on the Hilbert space
\begin{align}
\mathscr{H} \, =\,  L^2(\mathbb R^3,\D x) \otimes \mathcal F
\end{align}
with $\mathcal F= \bigoplus_{n=0}^\infty \bigotimes^n_{\rm sym} L^2 (\mathbb R^{3})$ the bosonic Fock space, and governed by the Hamiltonian
\begin{align}\label{eq: Froehlich Hamiltonian}
H \,  =\,  -\Delta_x + N + \sqrt{\alpha} \phi(v_x),
\end{align}
where $-\Delta_x = (-\ui \nabla_x)^2$, $x$ and $-\ui\nabla_x$ describing the position and momentum of the electron and $N = d \Gamma(1)$ denotes the number operator on Fock space (modeling the energy of the phonon field, whose dispersion relation is constant). The interaction is described by the linear field operator
\begin{align}
\label{eq: def of h_x(y)}
\phi(v_x) =a(v_x)+a^*(v_x)
\quad \text{with} \quad v_x(y) = v(y-x), \quad v(y) =  \frac{1}{|y|^2}.
\end{align}
The bosonic creation and annihilation operators, $a^*$, $a$, satisfy the usual  canonical commutation relations.
After setting $\hbar = 1$ and the mass of the electron equal to $1/2$, the polaron model depends on a single dimensionless parameter $\alpha>0$. In the present work, the precise choice of $\alpha$ is not relevant, as our statements shall hold equally for all values of the coupling constant.

An important property of the Fr\"ohlich Hamiltonian $H$ is that it defines a translation-invariant system, in the sense that it commutes with the total momentum operator
\begin{align}
[H, -\ui \nabla_x + P_f] \, = \, 0
\end{align}
where $P_f = \D \Gamma(-\ui\nabla)$ describes the momentum of the phonons. As a consequence, it is possible to simultaneously diagonalize the total momentum and the energy. This simultaneous diagonalization is implemented best by the Lee--Low--Pines transformation \cite{LeeLowPines} $S : \mathscr H \to \int^{\oplus}_{\mathbb R^3}  \mathcal F\, dP$ defined by $F \circ e^{\ui P_fx}$ where $F$ indicates the Fourier transformation w.r.t. $x$. As is easily revealed by a direct computation, $S  (- \ui\nabla_x + P_f) S^* =  \int_{\mathbb R^3}^{\oplus} P\, dP$ and $S  H S^* =\int_{\mathbb R^3}^\oplus H(P)\, dP$ with
\begin{align}\label{eq:fiber:Hamiltonian}
 H(P) = (P - P_f)^2 + N + \sqrt \alpha \phi(v).
\end{align}
The fiber operator $H(P)$ acts on $\mathcal F$ and describes the system at total momentum $P\in \mathbb R^3$.

Irrespectively of whether we take \eqref{eq: Froehlich Hamiltonian} or \eqref{eq:fiber:Hamiltonian} as our starting point, the definition of the model is somewhat formal, since $v\notin L^2$. A common way to define the Fr\"ohlich model is to start from the quadratic form associated with $H(P)$ (or $H$) and show that it is an infinitesimal perturbation of the form associated with the non-interacting operator ($\alpha=0$). By the KLMN Theorem this implies the existence of a unique self-adjoint operator associated with the form of $H(P)$ (or $H$), which we call the fibered Fr\"ohlich  Hamiltonian. For convenience of the reader we give the details in the appendix.

The main topic of this paper is the energy-momentum relation of the polaron, defined as the lowest possible energy as a function of the total momentum
\begin{align}
E_\alpha(P) : = \inf \sigma(H(P)).
\end{align}
By rotation invariance of the Fr\"ohlich Hamiltonian, $E_\alpha(P) = E_\alpha(RP)$ for all rotations $R\in O(3)$ and hence $E_\alpha(P) = E_{{\rm rad},\alpha}(|P|) $. It was shown by L. Gross \cite{Gross1970} that $E_\alpha(0) \le E_\alpha(P)$ for all $P\in \mathbb R^3$. As recently pointed out in a paper by Dybalski and Spohn \cite{DybalskiS2020}, it is further conjectured that $P=0$ is the unique global minimum
\begin{align}\label{eq:strict inequality}
E_\alpha(0) < E_\alpha(P)\quad \text{for all} \quad P \neq 0.
\end{align}
While this inequality is not surprising from the physics point of view, its mathematical proof turns out to be less obvious. In fact already Gross' proof for showing that $P=0$ is a global minimum requires a considerable amount of work. In a small neighborhood around zero the strict inequality can be inferred from the finiteness of the effective mass \cite{DybalskiS2020} (the effective mass is defined by $(M_{\rm eff})^{-1} = \partial_{|P|}^2E_{{\rm rad},\alpha}(|P|)$). The goal of this paper is to prove \eqref{eq:strict inequality} for all non-zero $P\in \mathbb R^3$. 
A proof was proposed in 1991 by Gerlach and L\"owen~\cite{Gerlach91}. However, their argument is not completely rigorous and its validity has been recently debated~\cite{DybalskiS2020}.
Our aim is to follow the idea suggested by Gerlach and L\"owen and turn it into a complete mathematical proof. To this end we also benefit from results on the Fr\"ohlich Hamiltonian that became available more recently. In particular, we make use of the localization of the essential spectrum of the fiber Hamiltonians by the HVZ Theorem~\cite{Moeller2006}.

Approaching the problem from a different angle, Dybalski and Spohn provide a proof of \eqref{eq:strict inequality} under the assumption that a central limit theorem for the polaron path measure with two-sided pinning holds true, see \cite[Sec. 5]{DybalskiS2020}, in particular Theorem 5.3 and Conjecture 5.2 therein. To our knowledge, this particular version of the central limit theorem, however, has not been rigorously established so far.

At the time of finalizing our paper, we were informed by Polzer about his new work \cite{Polzer22} that presents a proof of \eqref{eq:strict inequality} based on a strategy, also in the probabilistic framework, that does not rely on the validity of the central limit theorem  with two-sided pinning. In this new approach the polaron path measure is represented as a mixture of Gaussian measures and analyzed with techniques from renewal theory (see also \cite{Betz2022,Betz2021,MV19}). Among other interesting properties, \cite{Polzer22} shows that $E_{{\rm rad},\alpha}(|P|)$ is a monotone increasing function with strict monotonicity below the essential spectrum, i.e.,\ in particular in a neighborhood around $P=0$.

In our proof we follow an operator-theoretic approach that is inspired by  \cite{Lampart2021,Gerlach91,Froehlich1974,Moeller2005,Faris72,LaSch19}. Except for the following lemma, whose proof is based on results from \cite{Moeller2006} (see Section \ref{sec:proof}), our presentation is fully self-contained.\footnote{In this regard, let us note that we shall not make use of Gross' inequality $E_\alpha(0)\le E_\alpha(P)$ \cite{Gross1970}, which would evidently imply property (ii).}

\begin{lemma}\label{lem:properties} Let $\alpha >0$ and set $E_\alpha(P) = \inf \sigma(H(P))$.
\begin{itemize}
\item [$\rm (i)$] $\sigma_{\rm ess}(H(P)) = [E_\alpha(0) + 1,\infty)$ for all $P\in \mathbb R^3$ 
\item[$\rm (ii)$] The set of global minima of $P \mapsto E_\alpha(P)$ is non-empty.
\end{itemize}
\end{lemma}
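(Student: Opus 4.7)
I will derive both parts from the HVZ theorem of Møller~\cite{Moeller2006} for the fibered Fröhlich Hamiltonian, which yields
\[\sigma_{\rm ess}(H(P)) = [\Sigma, \infty), \qquad \Sigma \,=\, \inf_{Q \in \R^3} E_\alpha(Q) + 1,\]
with $\Sigma$ manifestly independent of $P$: given any $Q$, to an approximate ground state of $H(Q)$ one may add a ``soft'' phonon of momentum $P - Q$ whose spatial support is pushed off to infinity, producing a Weyl sequence for $H(P)$ at energy $E_\alpha(Q) + 1$. Thus claim~(i) reduces to the identification $E_\alpha(0) = E_{\min}$, where $E_{\min} := \inf_Q E_\alpha(Q)$, while the task in~(ii) is to show that this infimum is attained.

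To prove~(ii) I would argue by contradiction. If the infimum is not attained, continuity of $E_\alpha$ (which follows from the quadratic form perturbation theory used to define $H(P)$) forces any minimizing sequence $P_n$ to satisfy $|P_n| \to \infty$ along a subsequence. For $n$ large, $E_\alpha(P_n) < \Sigma$, and so $E_\alpha(P_n)$ is an isolated eigenvalue of $H(P_n)$ with a normalized ground state $\psi_n$. The standard infinitesimal form bound $\pm \sqrt{\alpha}\,\phi(v) \le \tfrac12 \bigl((P_n - P_f)^2 + N\bigr) + C_\alpha$ then yields the uniform a priori estimate
\[\big\langle \psi_n, \bigl((P_n - P_f)^2 + N\bigr) \psi_n \big\rangle \,\le\, C.\]
In particular $\langle N \rangle_{\psi_n}$ stays bounded while $\langle P_f \rangle_{\psi_n}$ is forced close to $P_n$, so the bulk of the mass of $\psi_n$ sits on configurations where a small number of phonons collectively carry a momentum of order $|P_n|$. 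Combining this with the decay $\hat v(k) \propto |k|^{-1}$ of the Fröhlich form factor, which suppresses the attractive contribution of high-momentum phonon creation, I expect to deduce $\liminf_{|P|\to\infty} E_\alpha(P) \ge \Sigma$, contradicting $E_\alpha(P_n) \to E_{\min} = \Sigma - 1$.

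Part~(i) then requires identifying $E_{\min}$ with $E_\alpha(0)$, i.e.\ showing that $P = 0$ lies on the rotationally symmetric orbit of minimizers produced by~(ii). \emph{This is the main obstacle}: without Gross's inequality the identification is not automatic, and must be extracted from the specific structure of Møller's HVZ proof, which I expect to localize the escaping phonon in a way that isolates $E_\alpha(0)$ rather than an arbitrary $E_\alpha(Q)$ as the residual energy, thereby identifying the bottom of $\sigma_{\rm ess}(H(P))$ directly as $E_\alpha(0) + 1$. Once this equality is in hand, the remaining statements follow cleanly: $E_\alpha(0) < E_\alpha(0)+1 = \inf \sigma_{\rm ess}(H(0))$ shows $E_\alpha(0)$ is an isolated eigenvalue of $H(0)$, confirming $P = 0$ is a (global) minimum and completing both~(i) and~(ii).
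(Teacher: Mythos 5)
Your overall strategy (invoke M\o ller's HVZ theorem, then rule out minimizing sequences escaping to infinity) is the same as the paper's, but there are three genuine gaps. First, and most importantly, M\o ller's results in \cite{Moeller2006} are proved only for form factors $v\in L^2(\mathbb R^3)$, whereas the Fr\"ohlich form factor $\hat v(k)\propto|k|^{-1}$ is not square-integrable and $H(P)$ is only defined as a form perturbation via KLMN. You apply the HVZ theorem directly to $H(P)$, which is not justified. The bulk of the paper's proof consists precisely of bridging this: it introduces cutoff operators $H_\Lambda(P)$, proves norm-resolvent convergence $H_\Lambda(P)\to H(P)$ \emph{uniformly in} $P$, and then transfers both the essential-spectrum identity and the large-$|P|$ behaviour from the cutoff model to the limit (via a separate lemma on stability of essential spectra under norm-resolvent convergence, and via $\sup_P\|f(H_\Lambda(P))-f(H(P))\|\to0$ for suitable $f\in C_c^\infty$). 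Second, your key quantitative step for (ii), namely $\liminf_{|P|\to\infty}E_\alpha(P)\ge\Sigma$, is only a heuristic (``I expect to deduce''): the a priori bound $\langle(P-P_f)^2+N\rangle_{\psi_n}\le C$ together with the decay of $\hat v$ does not by itself yield this, and turning it into a proof is essentially re-proving M\o ller's Theorem 2.4; the paper instead quotes that theorem for the cutoff model and again uses the uniform norm-resolvent convergence.

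Third, your proposed resolution of the identification $\inf_Q E_\alpha(Q)=E_\alpha(0)$ is misdirected. The Weyl-sequence construction you describe (attach a soft phonon of momentum $P-Q$ to an approximate ground state of $H(Q)$) works for \emph{every} $Q$ and therefore only locates the bottom of the essential spectrum at $\inf_Q E_\alpha(Q)+1$; nothing in that construction ``isolates'' $Q=0$. The equality $\inf_Q E(Q)=E(0)$ is a separate Gross-type input, which M\o ller establishes for the cutoff models (this is why the paper can quote $\sigma_{\rm ess}(H_\Lambda(P))=[E_{\alpha,\Lambda}(0)+1,\infty)$ with $E_{\alpha,\Lambda}(0)$ already in place) and which then survives the limit $\Lambda\to\infty$ because $E_{\alpha,\Lambda}(0)\to E_\alpha(0)$. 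Without importing that input for the cutoff model, your part (i) remains unproved; with it, your concluding paragraph is fine.
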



Before we continue with our main results, let us give some remarks on the context of the problem. For the non-interacting model, i.e. for $\alpha = 0$, one easily sees that $E_0(P) = \min \{ P^2 , 1 \}$ with $P^2$ describing the region where the full momentum is on the electron whereas in the constant region it is favorable that the total momentum is carried by a single phonon with the electron at rest. The expectation from physics is that the form of $E_\alpha(P)$ remains qualitatively the same also for non-zero $\alpha$, in particular that it remains essentially a parabola for a certain range of momenta around zero (though with a negative shift $E_\alpha(0)$ and with different curvature at $P=0$) and that it approaches $E_\alpha(0)+1$ as $|P|\to \infty$. That the expected parabola gives in fact an exact upper bound, $E_\alpha(P) \le E_\alpha(0) + P^2/(2M_{{\rm eff},\alpha})$ for all $P$ and $\alpha$, was shown in \cite{Polzer22}. The mathematical verification of the described picture is particularly interesting in the strong coupling limit $\alpha \to \infty$, where one can analyze the asymptotic expansion of $E_\alpha(P)$ and $M_{{\rm eff} , \alpha}$. For results in this direction see \cite{Lieb2019,Mit2022,Betz2022,Donsker1983,Lieb1997,Spohn1987}.

Another topic that is related to the energy-momentum relation and was debated for a long time in the physics literature is the existence of a ground state of $H$; see \cite{Peeters82,Gerlach88,Gerlach91} for a detailed discussion and further references. For weak coupling the non-existence is physically evident since the interaction term in $H$ is a small perturbation to the translation-invariant non-interacting part. For strong coupling, on the other hand, there is a leading-order contribution of the classical polarization field and hence the interaction cannot be viewed as a perturbation of a translation-invariant Hamiltonian anymore. In fact, in the limit of strong coupling, the ground state energy approaches the corresponding semiclassical energy \cite{Lieb1997,Donsker1983,Pekar54} which is known to possess a manifold of localized ground states \cite{Lieb1977}. This supported the idea that also for the full quantum model, the translational symmetry may be broken on the level of the ground state for sufficiently large coupling, thus resulting in a transition from a delocalized regime (weak coupling) to a localized one (strong coupling). The existence of a ground state for $H$, however, would require that the set of global minima of $E_\alpha(P)$ is a set of non-zero Lebesgue measure. As our result implies that the global minimum is unique, this possibility is ruled out and thus we obtain a rigorous proof (see Corollary \ref{thm:main:result}) of the non-existence of such a localization transition at any finite value of the coupling parameter.

\subsection{Main results}

The next statement is our main result.

\begin{theorem}\label{thm:global:minimum}
The energy-momentum relation $E_\alpha (P) = \inf \sigma (H(P))$ has its global minimum at $P=0$, i.e. for every $\alpha>0$ and every $P\neq 0$, we have the strict inequality
\begin{align}\label{eq:global:minimum}
E_\alpha (0) < E_\alpha(P).
\end{align}
\end{theorem}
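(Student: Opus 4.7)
The plan is to argue by contradiction. Assume that $E_\alpha(P_0) = E_\alpha(0)$ for some $P_0 \neq 0$. By Lemma~\ref{lem:properties}(i) this places $E_\alpha(P_0) = E_\alpha(0) < E_\alpha(0)+1 = \inf\sigma_{\mathrm{ess}}(H(P_0))$, so $E_\alpha(P_0)$ is an isolated eigenvalue of $H(P_0)$ with a normalized ground state $\psi\in\mathcal{F}$. Kato's analytic perturbation theory then makes $P\mapsto E_\alpha(P)$ real-analytic on a neighborhood of $P_0$, and the minimality condition combined with Hellmann--Feynman forces $\langle\psi, P_f\psi\rangle = P_0 \neq 0$; in particular $\psi$ is not the vacuum.

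The main step is then to use $\psi$ to produce a trial state for $H(0)$ whose Rayleigh quotient is strictly below $E_\alpha(0)$, contradicting the variational characterization of the ground-state energy. In the spirit of Gerlach and L\"owen I would take $\Phi_g = a(g)\psi$ for a function $g\in L^2(\mathbb{R}^3)$ localized near momentum $P_0$ and with support away from the origin (so that $\langle g,\hat v\rangle$ is finite). The commutator identities $[P_f, a(g)] = -a(kg)$, $[N, a(g)] = -a(g)$ and $[\phi(v), a(g)] = -\langle g, \hat v\rangle$ combine into the exact pull-through
\[
H(0)\,a(g) = a(g)\bigl(H(0) - 1\bigr) - 2 a(kg)\cdot P_f + a(|k|^2 g) - \sqrt{\alpha}\langle g, \hat v\rangle,
\]
which, together with $H(0) = H(P_0) + 2P_0\cdot P_f - |P_0|^2$ and $H(P_0)\psi = E_\alpha(0)\psi$, gives an exact formula for $\langle\Phi_g, H(0)\Phi_g\rangle$. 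Imposing the variational bound $\langle\Phi_g, H(0)\Phi_g\rangle \geq E_\alpha(0)\|\Phi_g\|^2$ and using the hypothesis yields a constraint of the form $R(g)\geq (1+|P_0|^2)\|\Phi_g\|^2$, where the remainder $R(g)$ is a concrete quantity controlled by the one-body density matrix $\gamma_\psi$ and related cross-densities of the ground state.

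The contradiction is produced by testing this constraint against a family $g_\delta$ of $L^2$-normalized bumps concentrated at $P_0$ with width $\delta\downarrow 0$. A careful expansion shows that $2P_0\cdot\langle\Phi_{g_\delta}, P_f\Phi_{g_\delta}\rangle$ and $2\langle\Phi_{g_\delta}, a(k g_\delta)\cdot(P_0 - P_f)\psi\rangle$ cancel to leading order, leaving $R(g_\delta) = |P_0|^2\|\Phi_{g_\delta}\|^2 + O(\delta^{5/2})$ against a required lower bound $(1+|P_0|^2)\|\Phi_{g_\delta}\|^2$ with $\|\Phi_{g_\delta}\|^2 = O(\delta^3)$. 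The constraint thus hinges on the subleading $O(\delta^{5/2})$ contribution, whose sign is controlled by a linear functional of $g_\delta$ in the direction $\hat P_0$ and involves the cross-density $\langle P_f\psi, a^*(k) a(k')\psi\rangle$. Exploiting that $\hat P_0$ is well defined (since $P_0 \neq 0$), one selects $g_\delta$ with a slight off-center asymmetry so that this subleading term carries the wrong sign, violating the constraint for $\delta$ small enough.

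The main obstacle is the sign-tracking of this subleading correction. The exact leading-order cancellation down to a single unit of $\|\Phi_{g_\delta}\|^2$ reflects the tightness of the variational bound, so the contradiction rests on rigorously controlling the one-body and cross-body density matrices of $\psi$ near momentum $P_0$ and verifying that $g_\delta$ lies in the appropriate form domain of $H(0)$. It is at this step that the assumption $P_0 \neq 0$ enters decisively: at $P_0 = 0$ rotational symmetry eliminates the preferred direction and the subleading correction vanishes, so no contradiction arises---consistent with the fact that $P=0$ is indeed a global minimum.
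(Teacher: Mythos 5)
Your strategy is genuinely different from the paper's, which does not work with trial states at all: it compactifies the electron coordinate to a torus $\Lambda_\ell$ with $q\in(2\pi/\ell)\Z^3$, so that the minimizing fiber supplies an honest ground state $\varphi_{\ell,q}\otimes\xi(q)$ of the auxiliary operator $h_\ell$, proves via a positivity-improving/Perron--Frobenius argument (Proposition \ref{lem:ground_state:hL}) that this ground state is simple, and then uses $E_\alpha(q)=E_\alpha(-q)$ to exhibit a second orthogonal ground state whenever $q\neq0$. The first part of your argument is essentially sound: by Lemma \ref{lem:properties} a global minimizer $P_0$ exists and lies strictly below the essential spectrum, hence $H(P_0)$ has an eigenvector $\psi$, and minimality forces $\langle\psi,P_f\psi\rangle=P_0$ (this follows directly from the variational principle applied to $\langle\psi,H(P)\psi\rangle$; the appeal to Kato analyticity is unnecessary and would anyway require knowing the eigenvalue is simple).

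The second part, however, contains a genuine gap at precisely the decisive step. After the leading cancellation your constraint reads $R(g_\delta)\ge\|\Phi_{g_\delta}\|^2$ with $\|\Phi_{g_\delta}\|^2=O(\delta^3)$, where $R(g_\delta)$ collects terms such as $2\langle a(g_\delta)\psi,\,a((P_0-k)g_\delta)\cdot P_f\,\psi\rangle$ and $\langle a(g_\delta)\psi,\,a((|k|^2-|P_0|^2)g_\delta)\psi\rangle$, for which you only have the upper bound $|R(g_\delta)|=O(\delta^{5/2})$. To reach a contradiction you must produce some admissible $g_\delta$ with $R(g_\delta)<\|\Phi_{g_\delta}\|^2$, i.e.\ you need either a strictly negative value of $R(g_\delta)$ or the improved bound $R(g_\delta)=o(\delta^3)$. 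Your proposal to achieve the former by an ``off-center asymmetry'' of $g_\delta$ presupposes that the relevant linear functional --- essentially a first derivative at $(P_0,P_0)$ of the one-body density $\gamma_\psi(k,k')=\langle a(k)\psi,a(k')\psi\rangle$ and of the cross-density $\langle P_f a(k)\psi,a(k')\psi\rangle$ --- is nonzero and has a computable sign. Nothing in the hypotheses provides this: these correlation functions of the unknown ground state $\psi$ are not controlled, could vanish to high order at $P_0$, or could have exactly the sign that makes every constraint $R(g)\ge\|\Phi_g\|^2$ hold. Keep in mind that the inequality $\langle\Phi_g,H(0)\Phi_g\rangle\ge E_\alpha(0)\|\Phi_g\|^2$ is a true statement for \emph{every} $g$ by the variational principle, so the burden is to identify a mechanism by which the hypothesis $E_\alpha(P_0)=E_\alpha(0)$ with $P_0\neq0$ forces a violation; no such mechanism is exhibited. (Even the normalization $\|\Phi_{g_\delta}\|^2\sim\gamma_\psi(P_0,P_0)\,\delta^3$ silently assumes continuity and strict positivity of the phonon density of $\psi$ at momentum $P_0$, which is itself unproven.) As it stands, the step on which the entire contradiction rests is an assertion rather than a proof, and it is unclear that it can be repaired without input of the global, positivity-type kind that the paper's argument uses.
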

\begin{remark} 
Note that our choice of the sign of $v$ in \eqref{eq: def of h_x(y)} does not matter, as the operator with form factor $\tilde v(y)=e^{i\theta} v(y)$ is unitarily equivalent to the one with $v$ via $\Gamma(e^{i\theta})$. Moreover, our proof can easily be generalized to all positive (up to a global phase) $v$ satisfying $\hat v(k) (k^2+1)^{-s}\in L^2(\mathbb R^3)$ for some $0\leq s<1/2$. For the sake of conciseness we focus on \eqref{eq: def of h_x(y)}
\end{remark}

As a corollary we obtain that the Fr\"ohlich Hamiltonian $H$ does not possess a ground state, irrespective of the value of the coupling constant. 
\begin{corollary}\label{thm:main:result}
There exists no $\alpha\in (0,\infty)$ such that $\inf \sigma(H) $ is an eigenvalue of $H$.
\end{corollary}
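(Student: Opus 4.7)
The plan is to deduce the corollary from Theorem~\ref{thm:global:minimum} via the fiber decomposition provided by the Lee--Low--Pines transformation. Since $SHS^* = \int^{\oplus}_{\mathbb R^3} H(P)\, \D P$, standard direct-integral theory gives
\[
\sigma(H) \,=\, \overline{\bigcup_{P\in \mathbb R^3} \sigma(H(P))},\qquad \inf \sigma(H)\,=\, \inf_{P\in \mathbb R^3} E_\alpha(P).
\]
Together with Theorem~\ref{thm:global:minimum} this identifies $\inf \sigma(H)= E_\alpha(0)$, and moreover tells us that for every $P\neq 0$ one has $E_\alpha(0) < E_\alpha(P) = \inf \sigma(H(P))$, so that $E_\alpha(0)\notin \sigma(H(P))$ and in particular
\[
\ker\bigl(H(P) - E_\alpha(0)\bigr) \,=\, \{0\} \qquad \text{for all } P\neq 0.
\]

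Next I would argue by contradiction: suppose there exists a normalized $\Psi\in \mathscr H$ with $H\Psi = E_\alpha(0)\Psi$. Writing $\psi = S\Psi \in \int^{\oplus}_{\mathbb R^3}\mathcal F\, \D P$, the eigenvalue equation is equivalent to
\[
H(P)\psi(P) \,=\, E_\alpha(0)\, \psi(P) \qquad \text{for Lebesgue-a.e.\ }P\in \mathbb R^3.
\]
By the preceding paragraph, this forces $\psi(P)=0$ for a.e.\ $P\neq 0$. Since the complement $\{0\}$ has Lebesgue measure zero, we conclude $\psi = 0$ in $\int^\oplus \mathcal F\, \D P$, hence $\Psi=0$, contradicting normalization.

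The only step requiring any care is the identity $\inf \sigma(H) = \inf_P E_\alpha(P)$ together with the characterization of eigenvectors of a direct integral via their fibers; both are standard once one knows that $P\mapsto H(P)$ is a measurable family and that $P\mapsto E_\alpha(P)$ is (at least) lower semicontinuous, which follows from the quadratic-form representation discussed in the appendix together with $\sigma_{\rm ess}(H(P))=[E_\alpha(0)+1,\infty)$ from Lemma~\ref{lem:properties}(i). Given Theorem~\ref{thm:global:minimum}, the substantive content of the corollary is simply that a single momentum $P=0$ is a null set, so no ground state can be supported there; there is no analytic obstacle beyond invoking the fiber decomposition correctly.
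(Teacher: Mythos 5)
Your proposal is correct, and it rests on the same two pillars as the paper's proof: the Lee--Low--Pines fiber decomposition and the observation that, by Theorem~\ref{thm:global:minimum}, the set of momenta where $E_\alpha(P)$ attains its minimum is a Lebesgue null set. The mechanism for extracting the contradiction differs, though. You argue on the level of the eigenvalue equation: an eigenvector of a direct integral satisfies $H(P)\psi(P)=\lambda\psi(P)$ fiberwise a.e., and since $E_\alpha(0)<\inf\sigma(H(P))$ for $P\neq 0$ the fibers must vanish off the null set $\{0\}$, forcing $\psi=0$. The paper instead runs a purely variational argument: it evaluates the Rayleigh quotient of the putative ground state through the fiber decomposition, bounds it below by $\int E_\alpha(P)\|\Psi_\alpha(P)\|^2\,\D P / \int\|\Psi_\alpha(P)\|^2\,\D P$, and notes that this is strictly larger than $E_\alpha(0)=\inf\sigma(H)$ because $\{P: E_\alpha(P)=E_\alpha(0)\}$ is null. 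The variational route stays entirely at the level of quadratic forms and avoids invoking the direct-integral eigenvalue theorem (e.g.\ \cite[Thm.~XIII.85]{RS1}) and any fiberwise domain considerations, which is convenient here since $H(P)$ is only defined via the KLMN theorem; your route needs that theorem but is otherwise equally valid. One small correction to your closing remark: for the identity $\inf\sigma(H)=\inf_P E_\alpha(P)=E_\alpha(0)$ (which both proofs use), the relevant regularity is \emph{upper} semicontinuity of $P\mapsto E_\alpha(P)$ near $P=0$ --- needed so that $\{P: E_\alpha(P)<E_\alpha(0)+\eps\}$ has positive measure and the spectrum of $H$ actually reaches down to $E_\alpha(0)$ --- not lower semicontinuity; this follows from $E_\alpha(P)$ being an infimum over a fixed form domain of functions continuous in $P$.
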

\begin{proof}[Proof]
Suppose on the contrary that there exists an $\alpha\in (0,\infty)$ such that $H$ has a ground state $\Psi_\alpha \in \mathscr H$. By the Lee--Low--Pines transformation (see introduction),
\begin{align}\label{eq:contra}
\inf \sigma(H) =  \frac{\int_{\mathbb R^3} dP \langle \Psi_{\alpha}(P) , H(P) \Psi_{\alpha}(P) \rangle }{\int_{\mathbb R^3} dP \|\Psi_{\alpha}(P) \|^2} \geq \frac{\int_{\mathbb R^3} dP \, E_\alpha (P) \|\Psi_{\alpha}(P)\|^2}{\int_{\mathbb R^3} dP \|\Psi_{\alpha}(P)\|^2}  
\end{align}
with $\Psi_{\alpha}(P) = F[e^{\ui P_fx} \Psi_\alpha](P) \in \mathcal F$ where $F$ is the Fourier transform w.r.t $x$. Note that $\|\Psi_{\alpha}(P) \|^2 \in L^1(\mathbb R^3)$ by Parseval and $e^{\ui P_fx} \Psi_\alpha \in L^2(\mathbb R^3,\mathcal F)$. Now by Theorem \ref{thm:global:minimum} the set $\{ P \in \mathbb R^3 \, | \, E_\alpha(P) =E_\alpha(0)\}$ has Lebesgue measure zero, and hence the right side of \eqref{eq:contra} is strictly larger than $E_\alpha(0) $. On the other hand, \eqref{eq:global:minimum} together with the fiber decomposition of $H$ imply that $ \inf \sigma(H) = E_\alpha(0) $, which yields a contradiction.
\end{proof}

\section{Proof of Theorem \ref{thm:global:minimum}}\label{sec:proof}

To prove Theorem \ref{thm:global:minimum} we introduce a suitable auxiliary Hamiltonian and analyze the degeneracy of its ground state. On the one hand, it shall be easily seen that the multiplicity is strictly larger than one if we assume that $P\mapsto E_\alpha(P)$ attains its global minimum for some $q\neq 0$. On the other hand, this cannot be the case as it will follow by an abstract  result that the ground state of this auxiliary operator is non-degenerate. As mentioned in the introduction, this idea goes back to Gerlach and L\"owen \cite{Gerlach91}.

By Lemma \ref{lem:properties} (ii) there exists a $q \in \mathbb R^3$ such that $E_\alpha(q)\le E_\alpha(P)$ for all $P\in \mathbb R^3$. For said $q$ we choose $\ell =\ell(q) >0$ such that $q\in (2\pi/\ell)\mathbb Z^3$ and consider the Hilbert space
\begin{align}
\mathscr H_\ell = L^2(\Lambda_\ell) \otimes \mathcal F,
\end{align}
where $\Lambda_\ell = [0,\ell]^3$ is the three-dimensional cube of side length $\ell$ with periodic boundary conditions, and $\mathcal F$ denotes again the bosonic Fock space over $L^2(\mathbb R^3)$. On this space we define the Hamiltonian
\begin{align}\label{eq: auxiliary Hamiltonian}
h_\ell = (-\ui \nabla_x  -  P_f)^2 +  N + \sqrt{\alpha} \phi(v) 
\end{align}
where $-\ui \nabla_x$ now denotes the gradient on $L^2(\Lambda_\ell)$. 
The reason for restricting the variable $x$ to the torus is that $h_\ell$ now has a ground state. 
Indeed, since $x$ only appears in the gradient, and since the corresponding momenta are restricted to $(2\pi / \ell ) \mathbb Z^3$, one can readily diagonalize
\begin{align}\label{eq:direct:sum:representation:hell}
 h_\ell= \bigoplus_{P\in \frac{2\pi}{\ell} \mathbb Z^3} \pi_{\ell,P} \otimes H(P)
\end{align}
with $\pi_{\ell,P} = |\varphi_{\ell,P}\rangle \langle \varphi_{\ell,P}|$ the projection onto the normalized plane wave $\varphi_{\ell,P}=\ell^{-1/2} e^{-\ui Px} \in L^2(\Lambda_\ell)$ with momentum $P$. Here it is important to note that the Fock space component is described by the fiber Hamiltonian $H(P)$ as defined in \eqref{eq:fiber:Hamiltonian}. Also note that since $H(P)$, $P\in \mathbb R^3$, is self-adjoint and bounded from below, so is $h_\ell$. By Lemma \ref{lem:properties} we know that $E_\alpha(q) < \inf \sigma_{\rm ess}(H(q))$, hence there is an eigenfunction $\xi(q) \in \mathcal F$ such that $H(q) \xi (q) = E_\alpha(q) \xi (q)$. Since $E_\alpha(q) \le E_\alpha(P)$ for all $P\in \mathbb R^3$, the wave function $\varphi_{\ell,q}  \otimes \xi(q) \in \mathscr H_{\ell}$ is a ground state of $h_\ell$ with eigenvalue $ E_\alpha(q)$. 

The next lemma, whose proof is postponed to the next section, shows that this ground state is actually unique.

\begin{proposition}\label{lem:ground_state:hL} For all $\alpha >0$ and any $\ell>0$ for which $\inf \sigma(h_\ell)$ is an eigenvalue (in particular for the value $\ell(q)$ chosen above), this eigenvalue has multiplicity one.
\end{proposition}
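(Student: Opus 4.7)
The plan is a Perron--Frobenius / positivity-improving semigroup argument. Since $v(y) = |y|^{-2}$ is strictly positive, I would first conjugate $h_\ell$ by the unitary $\Gamma(-1) = (-1)^N$, which commutes with $N$ and with $(-\ui\nabla_x - P_f)^2$ but satisfies $\Gamma(-1)\phi(v)\Gamma(-1) = -\phi(v)$, obtaining the unitarily equivalent operator
\[
\tilde h_\ell = (-\ui\nabla_x - P_f)^2 + N - \sqrt\alpha\,\phi(v).
\]
It is enough to show that the lowest eigenvalue of $\tilde h_\ell$ is simple. I would work in the position representation of Fock space, $\mathcal F = \bigoplus_n L^2_{\rm sym}((\mathbb R^3)^n)$, and with the natural cone
\[
\mathcal C = \{\Psi \in L^2(\Lambda_\ell;\mathcal F) : \Psi^{(n)}(x; y_1,\ldots, y_n) \geq 0 \text{ a.e. for every } n\}.
\]

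The key algebraic fact is the identity $(-\ui\nabla_x - P_f)^2 = -(\nabla_x - \sum_j \nabla_{y_j})^2$, a non-negative degenerate Laplacian in the ``electron minus phonon centre-of-mass'' direction. The corresponding free semigroup has an explicit positive kernel,
\[
(e^{-t(-\ui\nabla_x - P_f)^2}\Psi)^{(n)}(x; y_1,\ldots,y_n) = \int_{\Lambda_\ell} h_t(x - x')\,\Psi^{(n)}(x'; y_1 + x - x',\ldots,y_n + x - x')\,dx',
\]
with $h_t$ the positive heat kernel on the torus $\Lambda_\ell$ (with the usual winding contributions). The potential part $N - \sqrt\alpha\,\phi(v)$ admits a Dyson expansion whose terms are each positive operators on $\mathcal C$: $e^{-sN}$ is diagonal with positive entries, and the matrix elements of $\phi(v) = a(v)+a^*(v)$ in position representation are non-negative thanks to $v > 0$. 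An application of Trotter's formula then yields that $e^{-t\tilde h_\ell}$ preserves $\mathcal C$. To upgrade to strict positivity-improvement, one combines (i) the full-rank diffusion produced by the kinetic semigroup in $x$ and in the phonon centre of mass $Y=y_1+\cdots+y_n$ with (ii) the irreducibility of the coupling $\phi(v)$, which mixes all Fock sectors non-trivially and, through repeated creations, spreads support in the relative phonon coordinates; iterating the Trotter product sufficiently many times produces an image that is strictly positive almost everywhere on $\Lambda_\ell\times(\mathbb R^3)^n$ for every $n$.

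Once positivity-improvement of $e^{-t\tilde h_\ell}$ is established, an abstract Perron--Frobenius theorem in the spirit of \cite{Faris72} immediately yields that, whenever $\inf\sigma(\tilde h_\ell)$ is an eigenvalue, it is simple. Transporting back via $\Gamma(-1)$ gives the desired simplicity for $h_\ell$.

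The principal technical obstacle is the rigorous justification of the Dyson/Trotter manipulations in the presence of the singular form factor $v(y)=|y|^{-2}\notin L^2$: as $\phi(v)$ is only defined as a KLMN form perturbation, the positivity statements have to be first established for UV-cutoff form factors $v_\Lambda \in L^2$, for which the Dyson series converges in operator norm, and then transferred to $v$ by a norm-resolvent limit $\Lambda\to \infty$ using the estimates recalled in the appendix of the paper. A minor additional point is the winding sum in $h_t$ on the torus, which is a sum of non-negative Gaussian contributions and thus preserves positivity.
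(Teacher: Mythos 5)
Your overall strategy --- pass to a positive cone, show a resolvent or semigroup of $h_\ell$ improves positivity, and invoke a Perron--Frobenius argument \`a la Faris --- is the same as the paper's, and your conjugation by $\Gamma(-1)$ is exactly equivalent to the paper's choice of the alternating-sign cone $(-1)^n\Psi^{(n)}\ge 0$. The gap is in how you treat the singular form factor. You correctly note that $v\notin L^2$ forces you to work first with cutoffs $v_\Lambda$, but the transfer step you propose fails for the property you actually need: a norm (or norm-resolvent) limit of positivity-\emph{improving} operators is in general only positivity-\emph{preserving}, since the strict inequality $\langle K_\Lambda u,w\rangle>0$ for fixed nonzero cone elements $u,w$ can perfectly well tend to $0$ as $\Lambda\to\infty$. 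Positivity preservation alone yields, via Faris, only that a ground state may be chosen in the cone --- not that it is simple --- so the argument does not close as written. (Your Trotter/Dyson step at fixed cutoff is also only sketched; ``irreducibility of the coupling spreads support'' needs an actual mechanism, though that part is standard and repairable. The UV limit is the real obstruction.)

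The paper avoids taking any limit of positivity statements. It writes $h_\ell+1=K+T$ with $K=(1+a(v)\mathfrak h_{\ell,0}^{-1})\mathfrak h_{\ell,0}(1+\mathfrak h_{\ell,0}^{-1}a^*(v))$ and $T=-a(v)\mathfrak h_{\ell,0}^{-1}a^*(v)$, both defined directly for the singular $v$ because $a(v)\mathfrak h_{\ell,0}^{-1/2}$ is bounded. The resolvent is then the explicit Neumann series $(K+\lambda)^{-1}\sum_{j\ge 0}\bigl(-T(K+\lambda)^{-1}\bigr)^j$, every term of which is positivity preserving, and strict improvement is read off from the single term $K^{-1}$: for any nonzero cone element $\Psi$ with $\Psi^{(n)}\neq 0$, the vector $(a(-v)\mathfrak h_{\ell,0}^{-1})^n\Psi^{(n)}$ is a nonzero nonnegative function in the zero-phonon sector (this is where $v>0$ pointwise enters), and the kernel of $\mathfrak h_{\ell,0}^{-1}$ on $L^2(\Lambda_\ell)$ is strictly positive. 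To salvage your semigroup route you would need a uniform-in-$\Lambda$ lower bound of the form $(h_{\ell,\Lambda}+\lambda)^{-1}\ge B$ for a fixed positivity-improving $B$ (in the cone order), which in practice pushes you back to precisely this resolvent decomposition.
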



With this at hand, we can prove Theorem \ref{thm:global:minimum} by way of contradiction. To this end assume that $q\neq 0$. By rotational symmetry $E_\alpha(q) = E_\alpha(-q)$ and, since $E_\alpha(-q)<\inf\sigma_{\rm ess}(H(-q))$ by Lemma \ref{lem:properties}, the fiber Hamiltonian $H(-q)$ has an eigenfunction $\xi(-q) \in \mathcal F$ with eigenvalue $E_\alpha(-q)$. This implies that there is a second ground state of $h_\ell$, given by $\varphi_{\ell,-q} \otimes \xi(-q)\in \mathscr H_\ell$, that is orthogonal to $\varphi_{\ell,q} \otimes \xi(q)$. Thus we arrive at a contradiction to the statement of Proposition \ref{lem:ground_state:hL}, and therefore we can rule out that $q$ is non-zero. This shows that $P\mapsto E_\alpha(P)$ can attain its global minimum only at $P=0$, hence the proof of Theorem \ref{thm:global:minimum} is complete.\medskip

In the remainder of this subsection we prove Lemma \ref{lem:properties}. For that purpose we first state and prove the following general result.

\begin{lemma}\label{lem:aux:nrc} Let $A_n$, $n\in \mathbb N$, be a sequence of self-adjoint operators satisfying  $\sigma_{\rm ess}(A_n) = [a_n,b_n]$ with $- \infty \le a_n \le b_n \le \infty$. If $A_n\xrightarrow{n\to \infty}A$ in norm-resolvent sense for some self-adjoint operator $A$, then $a_n\xrightarrow{n\to \infty} a$, $b_n\xrightarrow{n\to \infty} b$ for some $- \infty \le a \le b \le \infty$ and $\sigma_{\rm ess}(A) = [a,b]$.
\end{lemma}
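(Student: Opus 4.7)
The plan is to show that the sequence of intervals $[a_n,b_n]=\sigma_{\rm ess}(A_n)$ converges in the Painlev\'e--Kuratowski sense to $\sigma_{\rm ess}(A)$, from which both the endpoint convergence $a_n\to a$, $b_n\to b$ in $[-\infty,\infty]$ and the interval form $\sigma_{\rm ess}(A)=[a,b]$ follow by elementary considerations on sequences of closed intervals: briefly, any two subsequential limits of $a_n$ would both have to coincide with the left endpoint of the common Painlev\'e--Kuratowski limit, which forces $a_n$ to converge in $[-\infty,\infty]$, and analogously for $b_n$.

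The main tool is the standard fact that norm-resolvent convergence $A_n\to A$ implies norm convergence of the spectral projections $E_{A_n}([c,d])\to E_A([c,d])$ whenever $c<d$ with $c,d\notin\sigma(A)$. Since two orthogonal projections at norm-distance less than one must have equal rank in $\mathbb{N}\cup\{\infty\}$, one obtains the rank-stability
\[
\mathrm{rank}\,E_{A_n}([c,d])=\mathrm{rank}\,E_A([c,d]) \quad \text{for $n$ large.}
\]
The characterization $\mu\in\sigma_{\rm ess}(B)\Leftrightarrow \mathrm{rank}\,E_B(U)=\infty$ for every neighborhood $U$ of $\mu$ then converts rank-stability into set-theoretic statements about the essential spectra.

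For the inclusion $\sigma_{\rm ess}(A)\subseteq \liminf_n [a_n,b_n]$, I would take $\mu\in\sigma_{\rm ess}(A)$ and a small $\epsilon>0$ with $\mu\pm\epsilon\notin\sigma(A)$: then $E_A([\mu-\epsilon,\mu+\epsilon])$ has infinite rank, hence so does $E_{A_n}([\mu-\epsilon,\mu+\epsilon])$ for $n$ large, which forces $[a_n,b_n]\cap[\mu-\epsilon,\mu+\epsilon]\neq \emptyset$; a diagonal extraction then produces $\mu_n\in[a_n,b_n]$ with $\mu_n\to\mu$. The reverse inclusion $\limsup_n [a_n,b_n]\subseteq \sigma_{\rm ess}(A)$ is symmetric: if $\mu_{n_k}\in[a_{n_k},b_{n_k}]$ and $\mu_{n_k}\to\mu$, the same rank identity yields $\mathrm{rank}\,E_A([\mu-\epsilon,\mu+\epsilon])=\infty$ for each admissible $\epsilon$, so $\sigma_{\rm ess}(A)\cap[\mu-\epsilon,\mu+\epsilon]\neq \emptyset$, and the closedness of $\sigma_{\rm ess}(A)$ combined with $\epsilon\to 0$ places $\mu$ in $\sigma_{\rm ess}(A)$.

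The main technical obstacles are the unbounded cases $a_n\to -\infty$ or $b_n\to +\infty$ and the availability of boundary points $c,d\notin\sigma(A)$ arbitrarily close to a given $\mu\in\sigma_{\rm ess}(A)$. The former are handled by first localizing to an arbitrary bounded window $[-M,M]$; the latter is automatic when $\sigma(A)$ has empty interior near $\mu$, and in any event sharp spectral projections can be replaced by continuous bump functions $\chi_\epsilon$ around $\mu$, using that $\chi_\epsilon(A_n)\to \chi_\epsilon(A)$ in norm via the functional calculus for norm-resolvent convergent sequences, and that $\chi_\epsilon(A)$ is non-compact precisely when $[\mu-\epsilon,\mu+\epsilon]$ meets $\sigma_{\rm ess}(A)$. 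No substantial difficulty arises beyond this standard spectral-theoretic bookkeeping.
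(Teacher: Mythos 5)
Your proposal is correct in its overall architecture but follows a genuinely different route from the paper. The paper first \emph{assumes} that the endpoints converge and then proves the two inclusions by separate means: for $\sigma_{\rm ess}(A)\subseteq[a,b]$ it invokes the stability, under norm-resolvent convergence, of the property that the spectrum is purely discrete in a fixed open interval (citing de Oliveira, Prop.\ 11.4.31); for the reverse inclusion it uses that resolvent set cannot suddenly appear in the limit (if $t\in\rho(A)$ then $\rho(A_n)$ contains points converging to $t$), so $(a,b)\subseteq\sigma(A)$, and a nondegenerate interval inside $\sigma(A)$ has no isolated points. The a priori assumption on $a_n,b_n$ is removed at the end by a subsequence argument. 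You instead prove directly that $\sigma_{\rm ess}(A_n)$ converges to $\sigma_{\rm ess}(A)$ as a set, via norm convergence of spectral projections and the Weyl rank characterization of the essential spectrum, and only afterwards read off the endpoint convergence. Your route is more self-contained and in fact more general (up to the last step it never uses that $\sigma_{\rm ess}(A_n)$ is an interval), at the price of more spectral-theoretic bookkeeping; the paper's argument is shorter because it outsources the two key stability statements to the literature.

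Two steps need attention. First, in your bump-function fallback, non-compactness passes from the limit to the approximants (if $\chi_\eps(A_n)$ were compact along a subsequence, its norm limit $\chi_\eps(A)$ would be compact), which is exactly what the inclusion $\sigma_{\rm ess}(A)\subseteq\liminf_n[a_n,b_n]$ requires; but it does \emph{not} pass from the approximants to the limit, since a norm limit of non-compact operators can be compact. So the bump-function device cannot be used as stated for the reverse inclusion $\limsup_n[a_n,b_n]\subseteq\sigma_{\rm ess}(A)$. That inclusion survives anyway: if no admissible $\eps$ exists arbitrarily close to $0$, then $\sigma(A)$ contains a one-sided interval abutting $\mu$, all of whose points are non-isolated in $\sigma(A)$ and hence lie in $\sigma_{\rm ess}(A)$, and closedness gives $\mu\in\sigma_{\rm ess}(A)$ directly; you should argue this way there rather than via $\chi_\eps$. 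Second, your final step (extracting $a_n\to a$, $b_n\to b$ from the set limit) presupposes that the limit set is non-empty, i.e.\ that the intervals $[a_n,b_n]$ do not escape to infinity; the paper secures the non-emptiness of $\sigma_{\rm ess}(A)$ by citing \cite[Thm.\ XIII.23]{RS1}, and you need an analogous input (or a consistent degenerate convention for $a=b=\pm\infty$) before the endpoint argument applies.
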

\begin{proof}[Proof of Lemma \ref{lem:aux:nrc}] First assume that $a_n, b_n \xrightarrow{n\to \infty} a,b $ for some $-\infty \le a \le b \le \infty$. Since $\sigma_{\rm ess}(A_n)\neq \varnothing$, the essential spectrum of $A$ cannot be empty \cite[Thm. XIII.23]{RS1}. Now, for $\varepsilon>0$ we choose $N(\varepsilon)$ such that $[a_n,b_n] \subset [a-\varepsilon, b+\varepsilon]$ for all $n\ge N(\varepsilon)$. It follows that $\sigma(A_n)$ is purely discrete in $(-\infty,a-\varepsilon) \cup (b+\varepsilon,\infty)$ for all $n\ge N(\varepsilon)$. By \cite[Prop. 11.4.31]{Oliviera} this implies that $\sigma(A)$ is purely discrete on the same interval and thus, using that $\varepsilon$ was arbitrary, $\sigma_{\rm ess}(A)\subset [a,b]$. Since $\sigma_{\rm ess}(A)\neq \varnothing $, we get $\sigma_{\rm ess}(A)= \{a\}$ if $a=b$.

For $a<b$, let $t\in (a,b)\cap \rho(A)$. By norm-resolvent convergence there is a sequence $ t_n\in  \rho(A_n)$ such that $t_n\xrightarrow{n\to \infty} t$ \cite[Thm. XIII.23]{RS1}. Since $\rho(A_n) \subset [a_n,b_n]^c \xrightarrow{n\to \infty}  [a,b]^c$, this contradicts the assumption $t\in (a,b)$. Hence $(a,b)\subset \sigma(A)$, and since the spectrum is closed, $[a,b]\in \sigma(A)$. This proves $\sigma_{\rm ess}(A)=[a,b]$ also for $a<b$.

To remove the initial assumption on $a_n,b_n$, assume there is no $a\in[-\infty,\infty]$ such that $a_n \xrightarrow{n\to \infty}a$ (the same argument applies to $b_n$). Then there are at least two subsequences of $a_n$ with limits $a,\tilde a \in[-\infty,\infty]$, $a\neq \tilde a$. Applying the argument from above to the two subsequences (and the associated subsequences of $A_n$) separately leads to the contradiction that $[a,b] = \sigma_{\rm ess}(A) = [\tilde a,b]$. 
\end{proof}

\begin{proof}[Proof of Lemma \ref{lem:properties}] The proof is essentially based on the results from \cite{Moeller2006}. Since the latter considers more regular polaron models with $v\in L^2(\mathbb R^3)$, we need to introduce the fiber Hamiltonians with UV cutoff. We define $H_{\Lambda}(P)$ as in \eqref{eq:fiber:Hamiltonian} with form factor $v_\Lambda$ defined by $\hat v_\Lambda =  \chi_\Lambda \hat v$ where $\chi_\Lambda$ is the characteristic function $k\mapsto \chi_{[0,\Lambda)}(|k|)$. Since $v_\Lambda \in L^2(\mathbb R^3)$, it is easy to verify that $H_{\Lambda}(P)$ is self-adjoint and bounded from below. Denoting $E_{\alpha,\Lambda}(P)= \inf\sigma(H_\Lambda(P))$, we can quote two important results from \cite{Moeller2006}: For every $\Lambda>0$
\begin{itemize}
\item[(i')] $\sigma_{\rm ess}(H_{\alpha,\Lambda}(P)) = [E_{\alpha,\Lambda}(0)+1,\infty)$,
\item[(ii')] $\lim_{|P|\to \infty} | E_{\alpha,\Lambda}(P) - \inf \sigma_{\rm ess}(H_\Lambda(P)) | = 0$.
\end{itemize}
These statements are given in \cite[Thm. 2.1. and Thm. 2.4]{Moeller2006}. By Proposition \ref{prop:H-conv}, we also have $H_{\Lambda}(P) \xrightarrow{\Lambda \to \infty} H(P)$ in norm-resolvent sense, which implies $E_{\alpha,\Lambda}(0) \xrightarrow{\Lambda\to \infty} E_\alpha(0)$ \cite[Thm. XIII.23]{RS1}. Hence (i) is a consequence of (i') and Lemma \ref{lem:aux:nrc}.
 
By the Lee--Low--Pines transformation $\mathcal E := \inf_{P\in \mathbb R^3} E_\alpha(P)$ coincides with $\inf\sigma(H)$ and thus $\mathcal E>-\infty$. Item (ii) states that $\mathcal M = \{ P\in \mathbb R^3 \, |\, E_\alpha(P) = \mathcal E \}$ is a non-empty set. Assuming the opposite, $\mathcal M = \varnothing$, implies the existence of a sequence $(P_j)_{j\in \mathbb N} \subset \mathbb R^3$, $|P_j| \to \infty$, satisfying $\lim_{j\to \infty}E_\alpha(P_j) = \mathcal E$. W.l.o.g. we can choose $P_j$ such that $|P_j|$ is monotone increasing and $E_\alpha(P_j)\in (\mathcal E, \mathcal E + \tfrac{1}{4})$ for all $j\ge 1$. Further consider $ f \in C_c^\infty(\mathbb R)$ with $\text{supp}(f)\subseteq (\mathcal E -\tfrac{1}{2} , \mathcal E  + \tfrac{1}{2})$ and $f(s) = 1$ for all $s \in (\mathcal E -\tfrac{1}{4}, \mathcal E+\tfrac{1}{4} )$. By uniform norm-resolvent convergence of $H_\Lambda(P)$ (see Proposition \ref{prop:H-conv}) and \cite[Thm. VIII.20]{RS1}, we have
\begin{align}
0 & = \lim_{\Lambda \to \infty} \sup_{P\in \mathbb R^3} \sno f(H_\Lambda(P)) - f(H(P)) \sno \notag \\ 
& \ge  \lim_{\Lambda \to \infty} \sup_{j \ge j_\Lambda }  \sno f(H_\Lambda(P_j)) - f(H(P_j)) \sno = \lim_{\Lambda \to \infty} \sup_{j \ge j_\Lambda } \sno f(H(P_j)) \sno \label{eq:compactness:proof}
\end{align}
where the last step follows from (i') and (ii') if we choose $j_\Lambda$ large enough. In more detail, for every $\Lambda$, we can choose $j_\Lambda$ such that $\sigma(H_{\Lambda}(P_j)) \cap (-\infty,E_{\alpha,\Lambda}(0) + \tfrac{3}{4}) =\varnothing$ for all $j\ge j_\Lambda$. Since $\mathcal E \le E_\alpha(0) \le E_{\alpha,\Lambda}(0) + \tfrac{1}{4}$ (here we use again $E_{\alpha,\Lambda}(0) \xrightarrow{\Lambda \to \infty} E_\alpha(0)$), this implies $\sigma(H_{\Lambda}(P_j)) \cap (-\infty,\mathcal E+ \tfrac{1}{2}] = \varnothing$ for all $j\ge j_\Lambda$, and thus $f(H_\Lambda(P_j)) = 0$. By assumption, however, $\inf \sigma(H(P_j))\in (\mathcal E , \mathcal E + \tfrac{1}{4})$ for all $j\ge 1$ and thus the right side of \eqref{eq:compactness:proof} is one, which is a contradiction.
\end{proof}

\subsection{Proof of Proposition~\ref{lem:ground_state:hL}}

In this section we show that the resolvent of $h_\ell$ is a positivity improving operator w.r.t. a suitable Hilbert cone using a strategy that was applied to the renormalized Nelson model in~\cite{Lampart2021}. (Compared to the latter we face the additional difficulty that $h_\ell$ is an operator on the tensor product $\mathscr H_\ell$ and not only on $\mathcal F$). The statement of Proposition \ref{lem:ground_state:hL} will then follow from a Perron-Frobenius type argument due to Faris \cite{Faris72}. 
Let 
\begin{align}
\mathcal C := \big\{ \Psi \in \mathscr H_\ell \, \big\vert \, \forall n\in \N_0: (-1)^n \Psi^{(n)}(x,y_1,\ldots,y_n)\ge 0 \big\}.
\end{align}
This defines a (Hilbert) cone in $\mathscr H_\ell$ in the sense of~\cite{Faris72}.
A bounded operator $K$ on $\mathscr H_\ell$ is called
\begin{itemize}
 \item positivity preserving (with respect to $\mathcal C$) if $K u \in \mathcal C$ for any $u\in \mathcal C$, and
\item positivity improving (with respect to $\mathcal C$) if $\langle Ku, v \rangle>0$ for any $u,v\in \mathcal C  \setminus\{ 0 \}$.
\end{itemize}

Now consider the self-adjoint positive operator
\begin{equation}
\mathfrak h_{\ell,0} =  (-\ui \nabla_x   + \ud \Gamma(-\ui \nabla))^2 + N +1
\end{equation}
acting on the Hilbert space $\mathscr H_\ell = L^2(\Lambda_\ell) \otimes \mathcal F$.

\begin{lemma}\label{lem:h_0-pos}
The resolvent $\mathfrak{h}_{\ell,0}^{-1}$ is positivity preserving with respect to the cone $\mathcal C$. Its restriction to the zero-phonon space $L^2(\Lambda_\ell)$ is positivity improving with respect to the cone of positive functions in $L^2(\Lambda_\ell)$. 
\end{lemma}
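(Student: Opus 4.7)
The plan is to exploit the Laplace transform identity
\begin{equation}
\mathfrak{h}_{\ell,0}^{-1} = \int_0^\infty e^{-t\mathfrak{h}_{\ell,0}}\, dt,
\end{equation}
so that it suffices to show the heat semigroup $e^{-t\mathfrak{h}_{\ell,0}}$, $t>0$, preserves the cone $\mathcal{C}$. Since the three summands $(-\ui\nabla_x + P_f)^2$, $N$ and $1$ pairwise commute, one may factor
\begin{equation}
e^{-t\mathfrak{h}_{\ell,0}} = e^{-t}\, e^{-tN}\, e^{-t(-\ui\nabla_x + P_f)^2}.
\end{equation}
The scalar $e^{-t}$ is positive, and $e^{-tN}$ acts on the $n$-phonon sector as the positive multiplier $e^{-tn}$, so both trivially preserve $\mathcal{C}$. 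The main point is therefore to identify the heat kernel of $(-\ui\nabla_x + P_f)^2$.

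To that end I would note that this operator preserves particle number and on the $n$-phonon sector equals $-(\nabla_x + \sum_{j=1}^n \nabla_{y_j})^2$. Introducing the coordinates $u = x$, $w_j = y_j - x$, a direct computation gives
\begin{equation}
\partial_{x_k} + \sum_{j=1}^n \partial_{y_j^k} = \partial_{u_k}, \qquad k=1,2,3,
\end{equation}
so the operator becomes simply $-\Delta_u$, acting trivially on the relative coordinates $w_j$. Translating back to the original variables yields
\begin{equation}
\bigl(e^{-t(-\ui\nabla_x + P_f)^2}\Psi^{(n)}\bigr)(x, y_1,\ldots,y_n) = \int_{\mathbb R^3} dx'\, G_t(x-x')\, \Psi^{(n)}\bigl(x', y_1 - (x-x'), \ldots, y_n - (x-x')\bigr),
\end{equation}
with $G_t > 0$ the standard Gaussian kernel on $\mathbb R^3$ and $\Psi^{(n)}$ extended $\ell\mathbb Z^3$-periodically in its first argument (equivalently, one may replace $G_t$ by its $\ell\mathbb Z^3$-periodization $G_t^\ell$). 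Since $G_t > 0$, this operator maps a.e.\ non-negative functions to a.e.\ non-negative functions on each sector, and since the sign prefactor $(-1)^n$ is constant within a sector it sends $\mathcal{C}$ into itself. Combined with the other two factors, this establishes that $\mathfrak{h}_{\ell,0}^{-1}$ is positivity preserving with respect to $\mathcal{C}$.

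For the zero-phonon statement, I would observe that $P_f\Omega = 0 = N\Omega$ on the Fock vacuum $\Omega$, so $\mathfrak{h}_{\ell,0}$ leaves $L^2(\Lambda_\ell)\otimes\Omega \cong L^2(\Lambda_\ell)$ invariant and reduces there to $-\Delta_x + 1$. Its Green's function on the torus may be written either as $\int_0^\infty e^{-t}\, G_t^\ell(x,x')\, dt$ with $G_t^\ell$ the (strictly positive) heat kernel on $\Lambda_\ell$, or as the $\ell\mathbb Z^3$-periodization of the Yukawa kernel $e^{-|z|}/(4\pi|z|)$; either representation makes manifest that the kernel is pointwise strictly positive, which is exactly the statement that $(-\Delta_x + 1)^{-1}$ is positivity improving with respect to the cone of non-negative functions in $L^2(\Lambda_\ell)$.

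The main conceptual point is that, even though $(-\ui\nabla_x + P_f)^2$ contains the non-positive cross term $2(-\ui\nabla_x)\cdot P_f$, it is the square of the generator of the \emph{diagonal} translation of all particle coordinates, which is precisely why its heat kernel is a non-negative Gaussian. The only bookkeeping subtlety is the interplay between the torus in $x$ and $\mathbb R^3$ in the $y_j$'s; the shift $y_j - (x-x')$ is interpreted by lifting $x,x'$ to $\mathbb R^3$ in the kernel expression, or equivalently by carrying out the whole argument in Fourier on $\tfrac{2\pi}{\ell}\mathbb Z^3 \times \mathbb R^{3n}$.
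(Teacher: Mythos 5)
Your proof is correct and arrives at essentially the same point as the paper: the paper simply writes down the explicit integral kernel of $\mathfrak{h}_{\ell,0}^{-1}$ on each $n$-phonon sector (the periodized Yukawa kernel with mass $\sqrt{n+1}$ times the shift deltas $\delta(y_j-x-y_j'+x'+k\ell)$, verified by Fourier transform) and observes it is a non-negative distribution, strictly positive on the zero-phonon sector. Your semigroup factorization and passage to the diagonal-translation coordinates $w_j=y_j-x$ is just a slightly more explanatory derivation of that same kernel, and the periodization subtlety is handled the same way.
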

\begin{proof}
 The restriction of $\mathfrak{h}_{\ell,0}^{-1}$ to $L^2(\Lambda_\ell)$ is given by the periodization of the resolvent of $-\Delta$ on $\R^3$, i.e. it has the integral kernel
 \begin{equation}
  \sum_{k\in \Z^3} \frac{\ue^{-|x-x'-k\ell|}}{4\pi |x-x'-k\ell|},
 \end{equation}
which converges due to the exponential decay of the numerator and is strictly positive.
Similarly, the integral kernel of $\mathfrak{h}_{\ell,0}^{-1}$ on the $n$-phonon space $\cF^{(n)}$ is (as one readily checks using the Fourier transform)
\begin{equation}
 \sum_{k\in \Z^3} \frac{\ue^{-\sqrt{n+1}|x-x'-k\ell|}}{4\pi |x-x'-k\ell|}\prod_{j=1}^n\delta(y_j-x-y'_j+x'+k\ell),
\end{equation}
which is a non-negative distribution.
\end{proof}

We recall that $v(y) = |y|^{-2}>0$ and $\hat v(k)=(4\pi |k|)^{-1}$.

\begin{lemma}\label{lem:ah^{-1}}
The operator $a(v)\mathfrak{h}_{\ell,0}^{-1}$ is bounded, and $1+a(v)\mathfrak{h}_{\ell,0}^{-1}$ is invertible, with bounded inverse given by
\begin{equation*}
( 1+a(v)\mathfrak{h}_{\ell,0}^{-1})^{-1}= \sum_{j=0}^\infty \Big(a(-v) \mathfrak{h}_{\ell,0}^{-1}\Big)^j,
\end{equation*}
where the sum converges in the operator norm.
\end{lemma}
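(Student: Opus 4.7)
I would prove the lemma by first establishing $\|K\|_{\mathrm{op}}<1$ for $K:=a(v)\mathfrak h_{\ell,0}^{-1}$, and then invoking the standard geometric series.

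The fiber decomposition $\mathfrak h_{\ell,0}=\bigoplus_{P\in(2\pi/\ell)\Z^3}\pi_{\ell,P}\otimes[(P-P_f)^2+N+1]$ and the fact that $a(v)$ acts only on the Fock factor reduce the task to a $P$-uniform estimate of $a(v)[(P-P_f)^2+N+1]^{-1}$. Restricting to the $n$-phonon sector and writing everything in phonon-momentum space with $\hat v(k)=(4\pi|k|)^{-1}$, a Cauchy--Schwarz application in the integration variable $k$ gives
\begin{equation*}
\bigl\|a(v)[(P-P_f)^2+N+1]^{-1}\bigr\|_{\cF^{(n)}\to\cF^{(n-1)}}^{2}\leq \frac{n}{16\pi^{2}}\sup_{K\in\R^3}\int_{\R^3}\frac{dk}{|k|^2\bigl[(k-K)^2+n+1\bigr]^2}.
\end{equation*}
Rescaling $k=\sqrt{n+1}\,k'$ extracts a factor $(n+1)^{-3/2}$ and leaves $\sup_{\tilde K}\int dk'/(|k'|^2[(k'-\tilde K)^2+1]^2)$, which is finite uniformly in $\tilde K$ by a short case analysis split into the regions $|k'|$ small (where $|k'|^{-2}$ is locally integrable in $\R^3$), $|k'-\tilde K|$ small (where the second factor is smooth and bounded), and $|k'|$ large (where the integrand decays like $|k'|^{-6}$). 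Multiplying by $n$ and using that $n/(n+1)^{3/2}$ attains its maximum $2/3^{3/2}<1/2$ at $n=2$, the resulting estimate stays strictly below $1$ uniformly in $n\ge 1$, while $K$ annihilates the vacuum sector. Hence $\|K\|_{\mathrm{op}}<1$.

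Given this norm bound, the standard Neumann series $(1+K)^{-1}=\sum_{j\ge 0}(-K)^j$ converges in operator norm. Since $v$ is real, $a(-v)=-a(v)$, so $(a(-v)\mathfrak h_{\ell,0}^{-1})^j=(-K)^j$ and the formula in the statement follows.

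The main obstacle is the integral estimate together with the careful tracking of the constant needed to secure $\|K\|<1$, rather than only boundedness. Should that tracking fail to close cleanly, the same sector-wise bound $\|K\|_{\cF^{(n)}\to\cF^{(n-1)}}\lesssim (n+1)^{-1/4}$ still implies $\|K^j\|_{\mathrm{op}}\lesssim C^j/((j+1)!)^{1/4}$ by chaining the bounds across the descending phonon sectors, and this is summable in $j$ by Stirling, providing operator-norm convergence of the Neumann series without requiring $\|K\|<1$.
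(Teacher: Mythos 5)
Your proposal is correct, and its ``fallback'' argument is exactly the proof in the paper: the sector-wise bound $\|a(v)\mathfrak{h}_{\ell,0}^{-1}\|_{\cF^{(n)}\to\cF^{(n-1)}}\lesssim (n+1)^{-1/4}$ (equivalently, $\|a(v)\mathfrak{h}_{\ell,0}^{-1}\Psi\|\le C\|(N+1)^{-1/4}\Psi\|$), chained across the descending phonon sectors to give $\|(a(v)\mathfrak{h}_{\ell,0}^{-1})^j\|\le C^j (j!)^{-1/4}$, which is summable, so the Neumann series converges in norm regardless of the size of $C$. The paper never attempts to show $\|a(v)\mathfrak{h}_{\ell,0}^{-1}\|<1$; the whole point of exploiting the number-lowering structure is to avoid having to control the constant. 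Your primary route is therefore a genuine variant, and as written it has a gap: your case analysis of $\sup_{\tilde K}\int dk'\,|k'|^{-2}[(k'-\tilde K)^2+1]^{-2}$ only establishes finiteness, whereas the conclusion ``the resulting estimate stays strictly below $1$'' requires an actual numerical bound on that supremum. The claim does happen to be true --- since both factors are symmetric decreasing, the supremum is attained at $\tilde K=0$, where the integral equals $4\pi\int_0^\infty (r^2+1)^{-2}\,dr=\pi^2$, so $\|K\|^2_{\cF^{(n)}\to\cF^{(n-1)}}\le n/(16(n+1)^{3/2})\ll 1$ --- but you would need to supply this computation (or a rearrangement argument) to close the primary route. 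What the primary route buys, if completed, is the stronger statement $\|K\|<1$ and a one-line invertibility; what the paper's route (your fallback) buys is robustness, since the same factorial mechanism is reused later for $(1+a(v)H_0^{-1})^{-1}$ in the self-adjointness construction, where no smallness of the constant is available. One small further point: you should also record, as the paper does implicitly, that the norm limit of the partial sums actually inverts $1+K$ (a telescoping identity), though this is routine once norm convergence is established.
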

\begin{proof}
For $n\in \N$ we have using the Fourier transform and Cauchy--Schwarz
\begin{align}
 &\|a(v)\mathfrak{h}_{\ell,0}^{-1/2} \Psi^{(n)}\|_{L^2(\Lambda_\ell)\otimes \cF^{(n-1)}}^2 \notag\\
 & = n\sum_{p\in \frac{2\pi}{\ell} \Z^3} \int_{\R^{3(n-1)}} dk_1\cdots dk_{n-1} \bigg| \int dk_n \frac{ \hat v(k_n) \Psi^{(n)}(p,k_1, \dots, k_n)}{((p+k_1+\cdots k_n)^2 + n+1)^{1/2}} \bigg|^2 \notag\\
 &\leq n \|\Psi^{(n)}\|^2 \sup_{k\in \R^3} \int d\xi\frac{|\hat v(\xi)|^2}{(k+\xi)^2 + n+1} \notag \\
 &\leq (n+1)^{1/2} \|\Psi^{(n)}\|^2 \frac{1}{(4\pi)^2}\int \frac{d\eta}{\eta^2(\eta^2+1)}.\label{eq:a-integral}
\end{align}
We thus have the inequality 
\begin{equation}\label{eq:a-bound}
 \| a(v)\mathfrak{h}_{\ell,0}^{-1/2} \Psi\| \leq C \| (N+1)^{1/4}\Psi\|,
 \end{equation}
 which implies that
 \begin{equation}
 \| a(v)\mathfrak{h}_{\ell,0}^{-1} \Psi\| \leq C \| (N+1)^{-1/4}\Psi\|. 
 \end{equation}
 For $j\in \N$  this implies
\begin{equation}
 \| (a(v)\mathfrak{h}_{\ell,0}^{-1})^j \Psi \| \leq C \| (a(v)\mathfrak{h}_{\ell,0}^{-1})^{j-1}(N+j)^{-1/4} \Psi\|
 \leq C^j (j!)^{-1/4} \|\Psi\|.
\end{equation}
Since $C^j (j!)^{-1/4}$ is summable, the Neumann series converges and one easily checks that the limit is the inverse of $1+a(v)\mathfrak{h}_{\ell,0}^{-1}$.
\end{proof}

With the operator $\mathfrak{h}_{\ell,0}$ we can write the auxiliary operator \eqref{eq: auxiliary Hamiltonian} as
\begin{equation*}
h_\ell + 1 = \underbrace{(1+a(v)\mathfrak{h}_{\ell,0}^{-1})\mathfrak{h}_{\ell,0} (1+\mathfrak{h}_{\ell,0}^{-1}a^*(v))}_{=:K} \underbrace{- a(v) \mathfrak{h}_{\ell,0}^{-1} a^*(v)}_{=:T} = K+T.
\end{equation*}

\begin{proposition}\label{prop:positivity}
Let $\alpha>0$. For all $\lambda>-\inf\sigma(h_\ell)$ the resolvent $(h_\ell+\lambda)^{-1}$ is positivity improving with respect to $\mathcal{C}$. 
\end{proposition}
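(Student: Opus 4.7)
The plan is to establish that $(h_\ell+\lambda)^{-1}$ is positivity improving with respect to $\mathcal{C}$ by exploiting the factorization $h_\ell+1=K+T$ introduced in the excerpt. Combined with Faris's Perron--Frobenius-type theorem, this yields the uniqueness statement of Proposition~\ref{lem:ground_state:hL}. Throughout, the strict positivity of $v(y)=|y|^{-2}$ is the essential input.

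I would first perform a sign analysis on $\mathcal{C}$. Because $v>0$, both $-a(v)$ and $-a^*(v)$ map $\mathcal{C}$ into itself: an application of $a(v)$ or $a^*(v)$ shifts the phonon count by one and integrates or multiplies against a non-negative weight, which flips the cone sign $(-1)^n$, and the external minus restores it. Combined with the positivity preservation of $\mathfrak{h}_{\ell,0}^{-1}$ from Lemma~\ref{lem:h_0-pos}, every summand in the Neumann series of Lemma~\ref{lem:ah^{-1}} preserves $\mathcal{C}$. Setting $B:=1+a(v)\mathfrak{h}_{\ell,0}^{-1}$, this means $B^{-1}$ and $(B^*)^{-1}$ preserve $\mathcal{C}$, and hence $K^{-1}=(B^*)^{-1}\mathfrak{h}_{\ell,0}^{-1}B^{-1}$ does too. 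The analogous sign count shows $-T=a(v)\mathfrak{h}_{\ell,0}^{-1}a^*(v)$ also preserves $\mathcal{C}$.

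To upgrade $K^{-1}$ to positivity improving, I would inspect the zero-phonon sector. For $\Psi\in\mathcal{C}\setminus\{0\}$,
\[
(B^{-1}\Psi)^{(0)}=\sum_{j\geq 0}\bigl(-a(v)\mathfrak{h}_{\ell,0}^{-1}\bigr)^j\Psi^{(j)}\in L^2(\Lambda_\ell)
\]
is a sum of pointwise non-negative functions whose summands cannot cancel, since they all carry the cone-sign $(-1)^0=+1$. The smallest index $j$ with $\Psi^{(j)}\neq 0$ contributes a strictly nonzero term, as pointwise positivity of $v$ and non-triviality of the integral kernel of $\mathfrak{h}_{\ell,0}^{-1}$ on each phonon sector from Lemma~\ref{lem:h_0-pos} propagate non-triviality downwards through successive applications of $-a(v)\mathfrak{h}_{\ell,0}^{-1}$. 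Since $\mathfrak{h}_{\ell,0}^{-1}$ restricted to the zero-phonon sector is positivity improving on positive $L^2(\Lambda_\ell)$-functions,
\[
\langle\Psi,K^{-1}\Phi\rangle\geq\langle (B^{-1}\Psi)^{(0)},\mathfrak{h}_{\ell,0}^{-1}(B^{-1}\Phi)^{(0)}\rangle_{L^2(\Lambda_\ell)}>0
\]
for all $\Psi,\Phi\in\mathcal{C}\setminus\{0\}$, so $K^{-1}$ is positivity improving.

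For general admissible $\lambda$ I would replace $\mathfrak{h}_{\ell,0}$ by $\mathfrak{h}_{\ell,0}+(\lambda-1)$ throughout the construction, producing the analogue $h_\ell+\lambda=K_{\lambda-1}+T_{\lambda-1}$ for which all kernel estimates and sign considerations carry over (the exponential decay in the kernel only improves). The bound $\|a(v)\mathfrak{h}_{\ell,0}^{-1/2}\Psi\|\leq C\|(N+1)^{1/4}\Psi\|$ from the proof of Lemma~\ref{lem:ah^{-1}} gives $\|T_{\lambda_0-1}K_{\lambda_0-1}^{-1}\|<1$ for $\lambda_0$ sufficiently large, so the Neumann series $(h_\ell+\lambda_0)^{-1}=K_{\lambda_0-1}^{-1}\sum_n(-T_{\lambda_0-1}K_{\lambda_0-1}^{-1})^n$ converges and exhibits $(h_\ell+\lambda_0)^{-1}$ as positivity improving. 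The resolvent identity $(h_\ell+\lambda)^{-1}=\sum_n(\lambda_0-\lambda)^n(h_\ell+\lambda_0)^{-(n+1)}$, convergent for $\lambda\in(-\inf\sigma(h_\ell),\lambda_0)$, then propagates the property to every admissible $\lambda$ by choosing $\lambda_0$ arbitrarily large. The principal technical obstacle is the non-triviality of $(B^{-1}\Psi)^{(0)}$: because the kernel of $\mathfrak{h}_{\ell,0}^{-1}$ on the higher phonon sectors is only a non-negative distribution with delta-function support, one must carefully verify that iterated integration against $v$ preserves non-triviality when descending from phonon number $j$ to zero.
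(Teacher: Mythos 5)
Your proposal is correct and takes essentially the same route as the paper's proof: the factorization $h_\ell+1=K+T$, a Neumann series with $-T$ positivity preserving, positivity improvement of $K^{-1}$ reduced to the zero-phonon sector where $\mathfrak{h}_{\ell,0}^{-1}$ improves positivity (using $v>0$ to propagate non-triviality down to phonon number zero), and a resolvent-series/analyticity argument to cover all admissible $\lambda$. The only points of difference are cosmetic (you shift $\mathfrak{h}_{\ell,0}$ by $\lambda-1$ rather than expanding $(1+T(K+\lambda)^{-1})^{-1}$ directly), though note that since $T$ is unbounded, the claim $\|T_{\lambda_0-1}K_{\lambda_0-1}^{-1}\|<1$ requires not only \eqref{eq:a-bound} but also the relative bound of $N$ by $K$ as in \eqref{eq:N-bound}.
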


\begin{proof}
It is sufficient to prove the statement for one $\lambda > -\inf\sigma(h_\ell)$, by analyticity of the resolvent (see also \cite[Lem. A.4]{Moeller2006}). For sufficiently large $\lambda$ we shall first prove
 \begin{align}\label{eq:Neumann-Reihe}
  & (h_\ell + 1 + \lambda)^{-1} = (K+T+\lambda)^{-1} \notag\\
  &  \quad\quad = (K +  \lambda)^{-1} \Big(1 + T(K + \lambda)^{-1}\Big)^{-1} = (K + \lambda)^{-1} \sum_{j=0}^\infty  \Big(- T(K + \lambda)^{-1}\Big)^j.
 \end{align}
To this end, we show that $T$ is infinitesimally $K$-bounded. In fact, by~\eqref{eq:a-bound},
\begin{equation}\label{eq:T-bound}
 \|T \Psi \| \leq C\| (N+1)^{1/2}\Psi\| \leq \eps \|N \Psi\| + C_\eps \|\Psi\|,
\end{equation}
for arbitrary $\eps>0$, and moreover,
\begin{align}
 \|N \Psi \| &\leq \|N (1+\mathfrak{h}_{\ell,0}^{-1}a^*(v)) \Psi \| + \|N \mathfrak{h}_{\ell,0}^{-1}a^*(v) \Psi \| \notag\\
 &\leq \|\mathfrak{h}_{\ell,0} (1+\mathfrak{h}_{\ell,0}^{-1}a^*(v)) \Psi \| +  C \|(N+1)^{3/4} \Psi \| \notag \\
 & \leq \|(1+a(v)\mathfrak{h}_{\ell,0}^{-1})^{-1}\| \|K\Psi\| + \tfrac34 \|N \Psi \| + (\tfrac14 C^4  +\tfrac34)\|\Psi\|. \label{eq:N-bound}
\end{align}

Proceeding with \eqref{eq:Neumann-Reihe}, we use that $-T$ is positivity preserving, since it is the product of the three positivity preserving operators $\mathfrak{h}_{\ell,0}^{-1}$ (see Lemma~\ref{lem:h_0-pos}) and $a(-v)$, $a^*(-v)$ (note the alternating sign in the definition of $\mathcal{C}$). The claim will thus follow if we can prove that $(K + \lambda )^{-1}$ improves positivity. As $K\geq 1$, this holds for all $\lambda>-1$ if it holds for $\lambda=0$.

Let $0\neq \Psi,\Phi \in \mathcal{C}$ and let $n,m$ be so that $\Psi^{(n)}\neq 0$, $\Phi^{(m)}\neq 0$. 
Set $\phi:= (a(-v)\mathfrak{h}_{\ell,0}^{-1})^m \Phi^{(m)}$ and $\psi:=(a(-v)\mathfrak{h}_{\ell,0}^{-1})^n \Psi^{(n)}$. Since $a(-v)$ and $\mathfrak{h}_{\ell,0}^{-1}$ are positivity preserving, $\phi$, $\psi$ are  non-negative elements of $L^2(\Lambda_\ell)$. Since  $v>0$, they do not vanish identically. 
We thus have
\begin{align}
 \langle \Phi, K^{-1} \Psi \rangle &= \bigg\langle \sum_{j=0}^\infty \big(a(-v)( \mathfrak{h}_{\ell,0})^{-1}\big)^j \Phi, \mathfrak{h}_{\ell,0}^{-1} \sum_{k=0}^\infty \big(a(-v)\mathfrak{h}_{\ell,0}^{-1}\big)^k \Psi \bigg\rangle 
 \ge  \langle \phi,\mathfrak{h}_{\ell,0}^{-1}  \psi \rangle >0,
 \end{align}
since $\mathfrak{h}_{\ell,0}$ improves positivity on $L^2(\Lambda_\ell)$ by Lemma~\ref{lem:h_0-pos}. This proves the claim.
\end{proof}

\begin{proof}[Proof of Proposition~\ref{lem:ground_state:hL}]
 The proposition is proved using the positivity-improving property of $h_\ell$ following arguments of \cite{Faris72}. Let $\mathscr{H}_{\ell}^\R$
be the subspace of real-valued functions in $\mathscr{H}_\ell$. Since $h_\ell$ is invariant under complex conjugation, it can be restricted to an operator on $\mathscr{H}_\ell^\mathbb R $.
Moreover, any eigenvalue $e$ of $h_\ell$ is also an eigenvalue of this restriction and their multiplicity is the same, since the $\R$-linear map of multiplication by the imaginary unit is an isomorphism between the real and imaginary subspaces of $\ker(h_\ell-e)$.
 
It is thus sufficient to prove that $\inf \sigma(h_\ell)$ is a simple eigenvalue of $h_\ell\vert_{\mathscr{H}_\ell^\R}$. This is equivalent to proving that, for $\lambda>-\inf\sigma(h_\ell)$, the space of real eigenfunctions of $(h_\ell+\lambda)^{-1}$ for its largest eigenvalue $e:=(\inf\sigma(h_\ell)+\lambda)^{-1}>0$ has dimension one.

Let $\Psi\in \mathscr{H}_\ell^\R$ be a normalized eigenfunction of $(h_\ell+\lambda)^{-1}$ with eigenvalue $e$. We can write
\begin{equation}
 \Psi=\Psi_+ - \Psi_-,
\end{equation}
where 
\begin{align}
 \Psi_\pm^{(n)}=(-1)^{n} \max\{ \pm \Psi^{(n)},0\}, 
\end{align}
so $\Psi_\pm\in \mathcal{C}$ and $\langle \Psi_+, \Psi_-\rangle=0$. We then have
\begin{align}
 e&=\langle \Psi, (h_\ell+\lambda)^{-1} \Psi \rangle \notag \\
 &= \langle \Psi_+, (h_\ell+\lambda)^{-1} \Psi_+ \rangle + \langle \Psi_-, (h_\ell+\lambda)^{-1} \Psi_- \rangle - 2\underbrace{ \langle \Psi_-, (h_\ell+\lambda)^{-1} \Psi_+ \rangle }_{\geq 0}\notag \\
 &\leq \langle (\Psi_++\Psi_-), (h_\ell+\lambda)^{-1} (\Psi_++\Psi_-) \rangle \leq e, \label{eq:positivity-ev}
\end{align}
since $e$ is the largest eigenvalue and $\|\Psi_++\Psi_-\|=\|\Psi\|=1$. We must thus have equality in~\eqref{eq:positivity-ev}, so
\begin{equation}
\langle \Psi_-, (h_\ell+\lambda)^{-1} \Psi_+ \rangle = 0.
\end{equation}
Since $(h_\ell+\lambda)^{-1}$ improves positivity  this implies that either $\Psi_+$ or $\Psi_-$ are equal to zero, i.e. $\Psi \in \mathcal{C}$ or $-\Psi\in \mathcal{C}$. 

Now assume there exist two orthogonal real eigenfunctions $\Phi, \Psi \in \ker((h_\ell+\lambda)^{-1}-e)$. By changing signs if necessary, we may assume that $\Phi,\Psi\in \mathcal{C}\setminus\{0\}$. Then
\begin{equation}
 \langle \Phi, \Psi \rangle = e^{-1}  \langle \Phi, (h_\ell+\lambda)^{-1}\Psi \rangle >0,
\end{equation}
a contradiction, so $e$ is a simple eigenvalue.
\end{proof}

\appendix

\section{Definition of the Fröhlich Hamiltonian}

In this section we define the Fröhlich Hamiltonian as a self-adjoint operator following the ideas of~\cite{LaSch19}. For a different proof, based on a commutator trick introduced in~\cite{Lieb1997,Lieb1958}, see~\cite{griesemerwuensch}. 

We will give the construction for the fiber operators $H(P)$, from which $H$ can be obtained by reversing the Lee--Low--Pines transformation. Consider the quadratic form
\begin{equation}
 Q_P(\Psi):=\langle \Psi,(H(P)+1)\Psi\rangle  = \lsp \Psi, ((P-P_f)^2 + N  +1) \Psi \rsp + 2 \sqrt \alpha \re \lsp \Psi , a(v) \Psi \rsp.
\end{equation}
Introducing the positive operator $H_0:=(P-P_f)^2 + N  +1$, one easily sees that $Q_P$ is well defined on $D(H_0^{1/2})$. Moreover, by inequality~\eqref{eq:aH-bound} below, $Q_P$ is an infinitesimal perturbation of the form of $H_0$.
Consequently, by the KLMN Theorem there exists a unique self-adjoint operator whose form is $Q_P$.
To make this more explicit, rewrite $Q_P$ as
\begin{equation}
 Q_P(\Psi) = \langle \Psi,(1+a(v)H_0^{-1})H_0(1+H_0^{-1}a^*(v))\Psi\rangle  - \langle \Psi, a(v)H_0^{-1} a^*(v)\Psi \rangle.
\end{equation}
From this, we see that $Q_P$ is associated with the operator
\begin{equation}\label{eq:H IBC}
 H(P)+1=\underbrace{(1+a(v)H_0^{-1})H_0(1+H_0^{-1}a^*(v))}_{=:K(P)} \underbrace{- a(v)H_0^{-1} a^*(v)}_{=:T(P)}=K(P)+T(P).
\end{equation}

\begin{proposition}\label{prop:sa}
 For $\alpha>0$ and $P\in \R^3$ the quadratic form $Q_P$ is the form of the self-adjoint operator $H(P)+1$ given by~\eqref{eq:H IBC} with domain
 \begin{equation*}
  D(H(P))=\big\{\Psi\in \cF \, \big\vert\, (1+H_0^{-1}a^*(v))\Psi \in D(H_0)\big\}.
 \end{equation*}
\end{proposition}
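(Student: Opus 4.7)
The plan is to construct $H(P)+1$ by applying the KLMN Theorem to the quadratic form $Q_P$, and then to identify the resulting self-adjoint operator with the factorized expression $K(P)+T(P)$ in~\eqref{eq:H IBC}, which automatically yields the claimed domain.

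First I would establish the fundamental estimate on $a(v) H_0^{-1/2}$ by copying the computation~\eqref{eq:a-integral}--\eqref{eq:a-bound} of Lemma~\ref{lem:ah^{-1}} on the full space $\R^3$; the $(P-P_f)^2$-term in $H_0$ only shifts the momentum integration variable and does not affect the estimate, so
\begin{equation*}
\|a(v) H_0^{-1/2}\Psi\| \le C \|(N+1)^{1/4}\Psi\|.
\end{equation*}
Sector-wise Cauchy--Schwarz then gives $|\langle \Psi, a(v)\Psi\rangle|\le C\|(N+1)^{1/4}\Psi\|\,\|H_0^{1/2}\Psi\|$, and combining with $\|(N+1)^{1/4}\Psi\|^2 \le \|\Psi\|\,\|H_0^{1/2}\Psi\|$ (which follows from $N+1\le H_0$) and Young's inequality produces the infinitesimal form bound
\begin{equation*}
|2\sqrt{\alpha}\,\re\langle \Psi, a(v)\Psi \rangle| \le \eps\,\langle \Psi, H_0 \Psi\rangle +C_\eps \|\Psi\|^2
\end{equation*}
for any $\eps>0$ and $\Psi\in D(H_0^{1/2})$. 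Hence $Q_P$ is closed and semibounded on $D(H_0^{1/2})$, and the KLMN Theorem yields a unique self-adjoint operator $A_P$ with $\langle \Psi, A_P\Psi\rangle = Q_P(\Psi)$ on $D(H_0^{1/2})$; we set $H(P):=A_P-1$.

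Next I would identify $A_P$ concretely through the factorization. The Neumann-series argument of Lemma~\ref{lem:ah^{-1}} transposes verbatim to show that $X := 1+\sqrt{\alpha}\,a(v) H_0^{-1}$ is bounded with bounded inverse, and consequently so is its adjoint $X^* = 1+\sqrt{\alpha}\, H_0^{-1} a^*(v)$. Hence $K(P) := X H_0 X^*$ is a positive self-adjoint operator on $D(K(P)) = \{\Psi \in \cF : X^*\Psi \in D(H_0)\}$, which is precisely the domain claimed for $H(P)$. The perturbation $T(P) = -\alpha\, a(v) H_0^{-1} a^*(v)$ is infinitesimally $K(P)$-bounded by the chain of estimates~\eqref{eq:T-bound}--\eqref{eq:N-bound} (reading $H_0$ and $K(P)$ for $\mathfrak h_{\ell,0}$ and $K$), and Kato--Rellich yields self-adjointness of $K(P)+T(P)$ on $D(K(P))$. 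A direct calculation, amounting to completing the square in $Q_P$, establishes
\begin{equation*}
\langle \Psi,(K(P)+T(P))\Psi\rangle = Q_P(\Psi) \qquad \text{for } \Psi\in D(K(P)),
\end{equation*}
and since $D(K(P))$ is a form core for $K(P)+T(P)$ contained in $D(H_0^{1/2})$, uniqueness in KLMN identifies $A_P = K(P)+T(P)$ with the stated domain.

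The main technical obstacle will be the verification of the $a(v)H_0^{-1/2}$-bound, which reduces to the uniform estimate $\sup_{q\in \R^3}\int d\xi\,|\hat v(\xi)|^2/((q+\xi)^2+m)\le Cm^{-1/2}$ for $m\ge 1$. The $(P-P_f)^2$-term of $H_0$ contributes only the shift $q$, which does not worsen the bound because the singularity $\hat v(\xi)=(4\pi|\xi|)^{-1}$ is mild enough to be integrable against the dispersion denominator in three dimensions. Once this is in hand, the remaining ingredients---invertibility of $X$ via the Neumann series, the $K(P)$-relative bound on $T(P)$, and the form-core property of $D(K(P))$ (which in turn rests on the fact that $X^*$ and $(X^*)^{-1}$ preserve $D(H_0^{1/2})$, as follows from~\eqref{eq:aH-bound} applied term by term to the Neumann series)---are direct adaptations of the arguments already carried out in Section~\ref{sec:proof}.
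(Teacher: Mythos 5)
Your proposal is correct and follows essentially the same route as the paper: the bound $\|a(v)H_0^{-1/2}\Psi\|\leq C\|(N+1)^{1/4}\Psi\|$, bounded invertibility of $1+a(v)H_0^{-1}$ via the Neumann series, self-adjointness of $K(P)$ on the stated domain, and the infinitesimal $K(P)$-bound on $T(P)$ via Kato--Rellich. You are somewhat more explicit than the paper about the KLMN step and the identification of the KLMN operator with $K(P)+T(P)$ through the form-core argument, but this is a refinement of the same argument rather than a different one.
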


\begin{proof}
 By the same proof as in~\eqref{eq:a-bound} we have for $\Psi\in D(N^{1/4})$
 \begin{equation}\label{eq:aH-bound}
  \|a(v)H_0^{-1/2}\Psi\| \leq C \| (N+1)^{1/4}\Psi\|.
 \end{equation}
Hence, by the proof of Lemma~\ref{lem:ah^{-1}}, the operator $1+a(v)H_0^{-1}$ and its adjoint are boundedly invertible.
This implies that $D(H(P))$ is dense, since $D(H_0)$ is.
Moreover, $K(P)$ is a symmetric, invertible operator and thus self-adjoint.  
The proof is completed by showing that $T(P)$ is infinitesimally $K(P)$-bounded as in~\eqref{eq:T-bound},~\eqref{eq:N-bound}.
\end{proof}

In the proof of Lemma~\ref{lem:properties} we used that the operators with cutoff $H_\Lambda(P)$ converge to $H(P)$ in norm-resolvent sense.
We give a proof of this fact in the following proposition (see also~\cite{griesemerwuensch}).

\begin{proposition}\label{prop:H-conv}
 Let $\hat v_\Lambda(k) = \hat v(k) \chi_{[0,\Lambda)}(|k|)$, then the family of self-adjoint operators
 \begin{equation*}
  H_\Lambda(P)=(P-P_f)^2 + N + \phi(v_\Lambda)
 \end{equation*}
with domain $D(H_\Lambda(P))=D(H_0)$ converges to $H(P)$ in norm-resolvent sense as $\Lambda\to \infty$ uniformly in $P\in \R^3$.
\end{proposition}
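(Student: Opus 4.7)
The plan is to establish norm resolvent convergence via a form-theoretic resolvent difference identity whose key input is the uniform convergence
\begin{equation*}
\lim_{\Lambda\to\infty}\sup_{P\in \mathbb R^3}\|H_0^{-1/2} a(v-v_\Lambda) H_0^{-1/2}\| = 0.
\end{equation*}
First I would note that the constant in~\eqref{eq:aH-bound} is independent of $P$ (after the substitution $\xi=\sqrt{n+1}\zeta$ used in the proof of Lemma~\ref{lem:ah^{-1}}, the variable $P$ drops out), and the analogous bound holds uniformly for $v_\Lambda$, $\Lambda \in [1,\infty]$. Hence $\phi(v)$ and $\phi(v_\Lambda)$ are infinitesimally form-bounded by $H_0$ uniformly in $P$ and $\Lambda$, so there exists $\lambda_0 > 0$ such that for all $\lambda \geq \lambda_0$, $P\in \mathbb R^3$, $\Lambda \in [1,\infty]$,
\begin{equation*}
H(P) + 1 + \lambda \geq \tfrac 12 H_0 \quad\text{and}\quad H_\Lambda(P) + 1 + \lambda \geq \tfrac 12 H_0,
\end{equation*}
which yields uniform bounds $\|H_0^{1/2} (H_{(\Lambda)}(P)+1+\lambda)^{-1}\Phi\| \leq C\|\Phi\|$.

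Next, I would prove the Fourier estimate by an explicit computation paralleling the one in Lemma~\ref{lem:ah^{-1}}, but with two factors of $H_0^{-1/2}$. On the $n$-phonon sector with $n \geq 1$, Cauchy--Schwarz in the annihilation variable $k$ together with $(P-K)^2 + n \geq n$ (where $K = k_1 + \cdots + k_{n-1}$ denotes the remaining momenta) give
\begin{equation*}
\|H_0^{-1/2} a(v-v_\Lambda) H_0^{-1/2} \Psi^{(n)}\|^2 \leq I_\Lambda(n)\,\|\Psi^{(n)}\|^2,\qquad I_\Lambda(n) := \sup_{k\in \mathbb R^3}\int_{|\xi| \geq \Lambda}\frac{|\hat v(\xi)|^2\,d\xi}{(k+\xi)^2 + n+1}.
\end{equation*}
The rescaling $\xi = \sqrt{n+1}\zeta$ yields $I_\Lambda(n) \leq (4\pi)^{-2}(n+1)^{-1/2} J(\Lambda/\sqrt{n+1})$ where $J(a) := \sup_k \int_{|\zeta| \geq a} d\zeta/(\zeta^2((k+\zeta)^2 + 1))$ is bounded and satisfies $J(a) \to 0$ as $a \to \infty$. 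Splitting $\sup_n I_\Lambda(n)$ into $n \leq M$ (where $J(\Lambda/\sqrt{n+1})\to 0$ uniformly as $\Lambda\to\infty$) and $n > M$ (where the prefactor $(n+1)^{-1/2}$ is small) then yields the claimed uniform convergence.

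Finally, with these ingredients in place the norm resolvent convergence follows by a short form computation. For $\Phi, \Psi \in \cF$, let $u = (H(P)+1+\lambda)^{-1}\Phi$ and $w = (H_\Lambda(P)+1+\lambda)^{-1}\Psi$; both lie in $D(H_0^{1/2})$ and satisfy the uniform bounds of the first paragraph. Expressing the resolvent difference via the forms of $H(P)$ and $H_\Lambda(P)$ yields the identity
\begin{equation*}
\lspX \Phi, [(H(P)+1+\lambda)^{-1} - (H_\Lambda(P)+1+\lambda)^{-1}]\Psi\rspX = -\sqrt\alpha\bigl(\lspX u, a(v-v_\Lambda) w\rspX + \lspX a(v-v_\Lambda) u, w\rspX\bigr),
\end{equation*}
where each bracket is interpreted as $\lspX u, a(v-v_\Lambda) w\rspX = \lspX H_0^{1/2} u, (H_0^{-1/2} a(v-v_\Lambda) H_0^{-1/2}) H_0^{1/2} w\rspX$. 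Applying Cauchy--Schwarz together with the first two paragraphs yields
\begin{equation*}
\sup_{P\in \mathbb R^3}\|(H(P)+1+\lambda)^{-1} - (H_\Lambda(P)+1+\lambda)^{-1}\| \leq C\,\|H_0^{-1/2} a(v-v_\Lambda) H_0^{-1/2}\| \xrightarrow{\Lambda\to\infty} 0,
\end{equation*}
which gives the conclusion. The main obstacle is the estimate in the second paragraph: the analogous bound with only one factor of $H_0^{-1/2}$, as in Lemma~\ref{lem:ah^{-1}}, only produces a bound of order $(N+1)^{1/4}$ which does not converge to zero in operator norm, so the refinement using both factors of $H_0^{-1/2}$ to absorb the prefactor $n$ arising from the annihilation operator is essential.
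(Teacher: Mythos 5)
Your argument is correct, but it follows a genuinely different route from the paper's. The paper reuses the dressing decomposition $H_\Lambda(P)+1=K_\Lambda(P)+T_\Lambda(P)$ already set up for self-adjointness in Proposition~\ref{prop:sa}, writes the resolvent difference as the three explicit operator products \eqref{eq:HK-difference}--\eqref{eq:HT-difference}, and controls each of them using $\|a(v-v_\Lambda)H_0^{-1}\|\to 0$ --- obtained from the one-sided bound \eqref{eq:a-difference}, whose constant vanishes and whose weight $(N+1)^{1/2}$ is absorbed via $N+1\leq H_0$ --- together with the boundedness of $H_0(1+H_0^{-1}a^*(v))(H(P)+z)^{-1}$ and of $N(H(P)+z)^{-1}$. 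You instead stay entirely on the level of quadratic forms: the polarized KLMN forms give the identity $\lspX \Phi,(R-R_\Lambda)\Psi\rspX=-\sqrt{\alpha}\lspX u,\phi(v-v_\Lambda)w\rspX$, and everything reduces to the symmetric estimate $\|H_0^{-1/2}a(v-v_\Lambda)H_0^{-1/2}\|\to 0$ plus the uniform coercivity $H_{(\Lambda)}(P)+1+\lambda\geq\tfrac12 H_0$. Your cancellation of the combinatorial factor $n$ against $H_0\geq N+1$ on the $(n-1)$-particle sector is the exact counterpart of the paper's absorption of $(N+1)^{1/2}$ into $H_0^{1/2}$; your rescaling and $M$-splitting for $\sup_n I_\Lambda(n)\to 0$ is correct but more elaborate than necessary, since one may simply bound the denominator below by $(k+\xi)^2+1$ and use the single vanishing integral of \eqref{eq:a-difference}. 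What your version buys is self-containedness --- it needs neither the domain description of $H(P)$ nor the invertibility of $1+a(v)H_0^{-1}$, and it transfers verbatim to any form factor with $\hat v(k)(k^2+1)^{-s}\in L^2$, $s<1/2$ --- while the paper's version recycles machinery it must develop anyway. Both yield uniformity in $P$ for the same reason: $P$ disappears into the supremum over the total-momentum variable in the key integral, exactly as you note.
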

\begin{proof}
Recall that $H_0 = (P-P_f)^2 + N + 1$. The bounds on the difference of the resolvents will manifestly be independent of $P$, so we will not emphasize this at every step. Let 
 \begin{equation}
  K_\Lambda(P):=(1+a(v_\Lambda)H_0^{-1})H_0(1+H_0^{-1}a^*(v_\Lambda))
 \end{equation}
and
\begin{equation}
 T_\Lambda(P) = - a(v_\Lambda)H_0^{-1}a^*(v_\Lambda),
\end{equation}
so that $ H_\Lambda(P)+1= K_\Lambda(P) + T_\Lambda(P)$, Similarly to in~\eqref{eq:H IBC}.
Then with $z =1\pm i$
\begin{align}
 &(H_\Lambda(P)+z)^{-1}- (H(P)+z)^{-1} = (K_\Lambda(P)+T_\Lambda(P)\pm i)^{-1}- (K(P)+T(P)\pm i)^{-1} \notag\\
 &= (H_\Lambda(P)+z)^{-1}(a(v-v_\Lambda)H_0^{-1} H_0(1+H_0^{-1}a^*(v)) (H(P)+z)^{-1} \label{eq:HK-difference} \\
 &\quad + (H_\Lambda(P)+z)^{-1}(1+a(v_\Lambda)H_0^{-1}) H_0 (1+H_0^{-1}a^*(v-v_\Lambda))) (H(P)+z)^{-1} \label{eq:HKLambda-difference} \\
 &\quad + (H_\Lambda(P)+z)^{-1}(T(P)-T_\Lambda(P)) (H(P)+z)^{-1}\label{eq:HT-difference}.
\end{align}
By an analogous bound to~\eqref{eq:a-integral}, we have
\begin{equation}\label{eq:a-difference}
 \|a(v-v_\Lambda)H_0^{-1/2}\Psi\| \leq \frac{1}{4\pi}\bigg(\int_{|k|>\Lambda} \frac{dk}{k^2(k^2+1)}\bigg)^{1/2} \|(N+1)^{1/2}\Psi\|,
\end{equation}
whence $\|a(v-v_\Lambda)H_0^{-1}\|$ tends to zero as $\Lambda\to \infty$.
Since $ H_0(1+H_0^{-1}a^*(v)) (H(P)+z)^{-1}$ is a bounded operator by construction of $H(P)$, this shows that~\eqref{eq:HK-difference} tends to zero in norm.

Since 
$|\hat v_\Lambda(k)|\leq |\hat v(k)|$ we have
\begin{equation}\label{eq:aLambda-bound}
 \|a(v_\Lambda)H_0^{-1/2}\Psi\|\leq C \|(N+1)^{1/4}\Psi\|  
\end{equation}
uniformly in $\Lambda$. With this, the bounds from the proof of Proposition~\ref{prop:sa} hold for $H_\Lambda(P)$ uniformly in $\Lambda$ and~\eqref{eq:HKLambda-difference} tends to zero for the same reason as~\eqref{eq:HK-difference}.

To conclude, note that by~\eqref{eq:a-difference} and~\eqref{eq:aLambda-bound} we have
\begin{equation}
 \|(T_\Lambda(P)-T(P)) \Psi\|\leq C_\Lambda \|(N+1)^{3/4}\Psi\|
\end{equation}
with $\lim_{\Lambda\to \infty}C_\Lambda=0$. Then, since $N$ is $H(P)$-bounded by construction of $H(P)$, the expression~\eqref{eq:HT-difference} also tends to zero in norm as $\Lambda\to \infty$, which proves the claim.
\end{proof}
%

\bigskip\vspace{1.5mm}\noindent
\textbf{Acknowledgements.}\medskip

D.M. and K.M. thank Robert Seiringer for helpful discussions. Financial support from the Agence Nationale de la Recherche (ANR) through the projects ANR-17-CE40-0016, ANR-17-CE40-0007-01, ANR-17-EURE-0002 (J.L.) and from the European Union’s Horizon 2020 research and innovation programme under the Maria Sk\l odowska-Curie grant agreement No. 665386 (K.M.) is gratefully acknowledged.
\noindent 


\end{spacing}

\end{document}